%%
%% This is file `sample-manuscript.tex',
%% generated with the docstrip utility.
%%
%% The original source files were:
%%
%% samples.dtx  (with options: `all,proceedings,bibtex,manuscript')
%% 
%% IMPORTANT NOTICE:
%% 
%% For the copyright see the source file.
%% 
%% Any modified versions of this file must be renamed
%% with new filenames distinct from sample-manuscript.tex.
%% 
%% For distribution of the original source see the terms
%% for copying and modification in the file samples.dtx.
%% 
%% This generated file may be distributed as long as the
%% original source files, as listed above, are part of the
%% same distribution. (The sources need not necessarily be
%% in the same archive or directory.)
%%
%%
%% Commands for TeXCount
%TC:macro \cite [option:text,text]
%TC:macro \citep [option:text,text]
%TC:macro \citet [option:text,text]
%TC:envir table 0 1
%TC:envir table* 0 1
%TC:envir tabular [ignore] word
%TC:envir displaymath 0 word
%TC:envir math 0 word
%TC:envir comment 0 0
%%
%% The first command in your LaTeX source must be the \documentclass
%% command.
%%
%% For submission and review of your manuscript please change the
%% command to \documentclass[manuscript, screen, review]{acmart}.
%%
%% When submitting camera ready or to TAPS, please change the command
%% to \documentclass[sigconf]{acmart} or whichever template is required
%% for your publication.
%%
%%
%\documentclass[manuscript,screen,review, anonymous]{acmart}
%\documentclass[manuscript,screen,review, nonacm]{acmart}
\documentclass[manuscript,screen,nonacm]{acmart}

%%
%% \BibTeX command to typeset BibTeX logo in the docs
\AtBeginDocument{%
  }

%% Rights management information. This information is sent to you
%% when you complete the rights form. These commands have SAMPLE
%% values in them; it is your responsibility as an author to replace
%% the commands and values with those provided to you when you
%% complete the rights form.
\setcopyright{acmlicensed}
\copyrightyear{2025}
\acmYear{2025}
\acmDOI{XXXXXXX.XXXXXXX}

\acmJournal{TAISAP}
%\acmVolume{}
%\acmNumber{}
%\acmArticle{}
%\acmMonth{}
%%
%%  Uncomment \acmBooktitle if the title of the proceedings is different
%%  from ``Proceedings of ...''!
%%
%%\acmBooktitle{Woodstock '18: ACM Symposium on Neural Gaze Detection,
%%  June 03--05, 2018, Woodstock, NY}
%%\acmISBN{978-1-4503-XXXX-X/2018/06}

%%
%% Submission ID.
%% Use this when submitting an article to a sponsored event. You'll
%% receive a unique submission ID from the organizers
%% of the event, and this ID should be used as the parameter to this command.
%%\acmSubmissionID{123-A56-BU3}

%%
%% For managing citations, it is recommended to use bibliography
%% files in BibTeX format.
%%
%% You can then either use BibTeX with the ACM-Reference-Format style,
%% or BibLaTeX with the acmnumeric or acmauthoryear sytles, that include
%% support for advanced citation of software artefact from the
%% biblatex-software package, also separately available on CTAN.
%%
%% Look at the sample-*-biblatex.tex files for templates showcasing
%% the biblatex styles.
%%

%%
%% The majority of ACM publications use numbered citations and
%% references. The command \citestyle{authoryear} switches to the
%% "author year" style.
%%
%% If you are preparing content for an event
%% sponsored by ACM SIGGRAPH, you must use the "author year" style of
%% citations and references.
%% Uncommenting
%% the next command will enable that style.
%%\citestyle{acmauthoryear}
%\usepackage{algorithm}
%\usepackage{algorithmic}
\usepackage[ruled,linesnumbered]{algorithm2e}
\usepackage{graphicx}  
\usepackage{multirow}   
%%
%% end of the preamble, start of the body of the document source.
\begin{document}

%%
%% The "title" command has an optional parameter,
%% allowing the author to define a "short title" to be used in page headers.
\title{A Versatile Framework for Designing Group-Sparse Adversarial Attacks}

%%
%% The "author" command and its associated commands are used to define
%% the authors and their affiliations.
%% Of note is the shared affiliation of the first two authors, and the
%% "authornote" and "authornotemark" commands
%% used to denote shared contribution to the research.
\author{Alireza Heshmati}
\email{alireza.heshmati@ee.sharif.edu}
\orcid{0000-0001-6023-7964}
\author{Saman Soleimani Roudi}
\email{saman.soleimani@ee.sharif.edu}
\orcid{0009-0002-9095-9313}
\affiliation{%
	\department{Department of Electrical Engineering}
	\institution{Sharif University of Technology}
	\city{Tehran}
	\country{Iran}
}

\author{Sajjad Amini}
\email{s\_amini@sharif.edu}
\orcid{0000-0002-0322-9324}
\author{Shahrokh Ghaemmaghami}
\email{ghaemmag@sharif.edu}
\orcid{0000-0002-9556-8090}
\affiliation{
	\department{Electronics Research Institute (ERI)}
	\institution{Sharif University of Technology}
	\city{Tehran}
	\country{Iran}
}

\author{Farokh Marvasti}
\email{marvasti@sharif.edu}
\orcid{0000-0002-4635-8986}
\affiliation{%
	\department{Department of Electrical Engineering}
	\institution{Sharif University of Technology}
	\city{Tehran}
	\country{Iran}
}

%%
%% By default, the full list of authors will be used in the page
%% headers. Often, this list is too long, and will overlap
%% other information printed in the page headers. This command allows
%% the author to define a more concise list
%% of authors' names for this purpose.
\renewcommand{\shortauthors}{Heshmati et al.}

%%
%% The abstract is a short summary of the work to be presented in the
%% article.
\begin{abstract}
Existing adversarial attacks often overlook the sparsity of perturbations, limiting their effectiveness in modeling structural perturbations and explaining how deep neural networks (DNNs) process meaningful input patterns.
In this paper, we propose our framework ATOS (Attack Through Overlapping Sparsity), which leverages an efficient differentiable optimization framework to generate various forms of structured sparse adversarial perturbations, including element-wise, pixel-wise, and group-wise types.
For white-box attacks on image classifiers, we introduce the Overlapping Smoothed $\ell_0$ (OSL0) function, which ensures convergence to a stationary point while making the perturbations sparse and structured. By grouping channels and adjacent pixels, ATOS enhances the interpretability of adversarial examples and facilitates the identification of robust versus non-robust image features. Furthermore, we utilize the logarithm of the sum of exponential absolute values to approximate the $\ell_\infty$ gradient, enabling tight control over perturbation magnitude. ATOS achieves a $100\%$ attack success rate on benchmark datasets such as CIFAR-10 and ImageNet, while significantly improving the sparsity and structural coherence of perturbations compared to state-of-the-art techniques. Furthermore, the structured group-wise attack of ATOS highlights critical regions from the DNN’s perspective, offering improved explainability and serving as a counterfactual explanation by replacing key regions of the original class with robust features from the target class.
\end{abstract}

%\begin{CCSXML}
%<ccs2012>
   %<concept>
      % <concept_id>10010147.10010257.10010258.10010261.10010276</concept_id>
      % <concept_desc>Computing methodologies~Adversarial learning</concept_desc>
      % <concept_significance>500</concept_significance>
      % </concept>
  % <concept>
     %  <concept_id>10010147.10010178.10010187</concept_id>
     %  <concept_desc>Computing methodologies~Knowledge representation and reasoning</concept_desc>
      % <concept_significance>100</concept_significance>
      % </concept>
 %</ccs2012>
%\end{CCSXML}

%\ccsdesc[500]{Computing methodologies~Adversarial learning}
%\ccsdesc[100]{Computing methodologies~Knowledge representation and reasoning}

%% Keywords
%%\keywords{Adversarial attacks, sparse adversarial examples, group adversarial examples, overlapping group sparsity, explainable adversarial attacks, counterfactual explanations}

%\received{20 February 2007}
%\received[revised]{12 March 2009}
%\received[accepted]{5 June 2009}

%%
%% This command processes the author and affiliation and title
%% information and builds the first part of the formatted document.

\maketitle
\section{Introduction}
Deep neural networks (DNNs) have shown remarkable performance in various machine learning domains, particularly computer vision \cite{dlfwc20, buric2025next, yang2025diffmic,he2025diffusion, awais2025foundation}. Despite their glamorous accuracy, DNNs are highly vulnerable to adversarial perturbations generated by adversarial attacks, which can mislead models and may even be imperceptible to humans \cite{szegedy2014intriguing, aeozl19,long2022survey, zhang2025adversarial}. 

This vulnerability stems from the fact that standard DNNs tend to rely on non-robust features, noise-like patterns that are sensitive to small changes, rather than robust features, which align more closely with natural data and resist perturbations \cite{ilyas2019adversarial, aaola20}. Although standard training uses non-robust features to improve the training accuracy of DNNs, these features increase susceptibility to adversarial attacks \cite{ilyas2019adversarial, li2024adversarial}. In contrast, robust DNNs, typically achieved through adversarial training \cite{madry2017towards}, emphasize robust features, causing most adversarial attacks to fail or generate semantically meaningful and perceptible perturbations \cite{jeanneret2023adversarial}. However, generating such perturbations is not the primary goal of adversarial attacks \cite{szegedy2014intriguing, aaola20}. Instead, these perturbations can serve as counterfactual explanations, where the closest meaningful image in terms of a distance metric, such as sparsity, alters the predicted class of the model \cite{goyal2019counterfactual, jeanneret2023adversarial,carrizosa2024mathematical}. Consequently, strong adversarial attacks, whose design is the main aim of this paper, are valuable for evaluating DNN robustness, improving standard DNNs through adversarial training, and generating counterfactual explanations for robust models.

Adversarial attacks aim to transform an input image into an adversarial example by adding perturbations, fooling trained DNNs while both share significant features and the main class remains perceptible to humans. Depending on the attacker's knowledge of the DNN, adversarial attacks are classified into two types: white-box, where the attacker knows the model's architecture and parameters, and black-box, where only the model outputs are accessible. While black-box attacks are useful for fooling inaccessible DNNs, white-box attacks are valuable for analyzing and robustly training specific DNNs. Therefore, this paper aims to design a white-box attack. These attacks can also be classified as targeted, aiming for a specific output, or untargeted, where the goal is simply to mislead the model \cite{akhtar2018threat,tang2024black, li2024adversarial}.

The general formulation of a white-box attack for image classification is:
\begin{equation}
	\min_{\boldsymbol{\Delta}} \; \mathcal{L}(\boldsymbol{y}_{\delta},\boldsymbol{y}_t)
	+ \mathcal{D}(\boldsymbol{X},\boldsymbol{X}+\boldsymbol{\Delta})
	\quad\text{s.t.}\;
	\boldsymbol{X}+\boldsymbol{\Delta} \in [0,1]_q^{\,c\times w \times h},
	\label{eq-atce}
\end{equation}
Here, $\boldsymbol{X}$ denotes the input image and $\boldsymbol{\Delta}$ denotes the adversarial perturbation. The function $\mathcal{D}(\cdot,\cdot)$ is a distance metric measuring perturbation size, and $\mathcal{L}(\cdot,\cdot)$ is the objective (loss) used to achieve the attack goal. The model output on the perturbed image $\boldsymbol{X}+\boldsymbol{\Delta}$ is written $\boldsymbol{y}_{\delta}$, while $\boldsymbol{y}_{t}$ denotes the adversary's target output. The tuple $(c,w,h)$ indicates the image shape (channels, width, height). The index $q$ denotes quantization: the adversarial pattern is quantized to the original pixel-value scale, typically the integer set $\{0,1,\ldots,255\}$, before being clamped to the valid input range.

The distance metric in formulation \eqref{eq-atce}  ensures that the adversarial image remains close to the input image by applying regularization, allowing the main class to remain perceptible to humans and preventing the attack from creating arbitrary perturbations. The most used distance metric constrains the value of each perturbation element, making it imperceptible to humans \cite{goodfellow2014explaining, croce2020reliable}. 
However, these attacks do not focus on perturbation sparsity and thus do not aim to minimize the number of perturbed items in the input image, while sparsity regularization is useful for identifying fragile items in the input image \cite{ilyas2019adversarial}.
In this context, pioneering work by \cite{opajd19} has used evolutionary algorithms to illustrate the vulnerability of deep neural networks to adversarial attacks by manipulating only a single pixel in the input image.
Since this regularization focuses solely on the sparsity of perturbations while neglecting their intensity, the resulting adversarial examples are often perceptible. Consequently, subsequent studies have relaxed the regularization on the number of perturbed items and aimed to develop algorithms that control both invisibility and sparsity simultaneously \cite{safmm19, saich19, amini2024fast}. 

In many attack designs, sparsity regularization treats each element of the pixel channels separately, applying element-wise sparsity regularization to limit the number of perturbed elements. However, since a real image is a two-dimensional array, the alteration of each pixel (corresponding to a change in at least one channel within that pixel) should be measured, rather than counting the number of altered channels across all pixels. This scenario can be effectively addressed through pixel-wise sparsity regularization, a group sparsity metric that differentiates between element-wise and pixel-wise sparsity by treating each pixel as a single unit, rather than focusing on individual color channels. This regularization is effective for maintaining invisibility due to the generally scattered perturbations and its focus on fewer perturbed pixels. However, it fails to form coherent regions that can highlight vulnerable areas from a DNN's perspective \cite{xu2018structured}. Consequently, incorporating group-wise perturbation in attack design becomes necessary, in addition to pixel-wise perturbation, which is either rare or weak in recent frameworks. Metrics that promote the perturbation of adjacent pixels together are crucial for targeting specific regions in the input image, enhancing attack explainability, and generating counterfactual examples \cite{jeanneret2023adversarial,ICLR2025_398b00a0}. Therefore, in our attack design, both grouping and sparsity play a significant role.

Group sparsity was first introduced for selecting grouped factors in regression \cite{yuan2006model}. Unlike traditional sparsity regularizers that focus on individual elements, group sparsity regularizers encourage sparse use of input element groups \cite{yuan2006model,hamidi2010fast}.
An extension of group sparsity regularization involves overlapping groups \cite{jenatton2011structured,jacob2009group}. While some modalities, such as image, have naturally group structures (e.g., each pixel comprises three color channels: Red, Green, and Blue), others may not have predefined groups. In structured adversarial perturbations, the attack must identify optimal groups in various positions and shapes, making overlapping groups advantageous for flexibility in non-trivial scenarios \cite{xu2018structured}.

In this paper, we propose a novel white-box adversarial attack on DNNs, called ATOS (Attack Through Overlapping Sparsity). We introduce a general sparsity function that supports all sparsity modes, including pixel-wise and group-wise, and use a practical function to approximate the gradient of the $\ell_\infty$ norm to regularize perturbation intensity. The proposed framework creates accurate adversarial examples and counterfactual explanations.

Our main contributions are:
\begin{itemize}
	\item \textbf{Unified sparse attack framework:} We propose ATOS, a versatile framework for generating element-, pixel-, or group-sparse adversarial perturbations via an overlapping sparsity regularizer, with provable convergence.
	
	\item \textbf{Adaptive intensity regularization:} Unlike threshold-based adversarial attack frameworks \cite{saich19,xu2018structured,zhu2021sparse,amini2024fast}, ATOS employs a gradient-based $\ell_\infty$ approximation that adapts perturbation strength per input, enabling imperceptible or controlled attacks without preset thresholds.

	\item \textbf{Explainable attack on DNNs:} ATOS offers superior sparsity control and reveals model vulnerabilities by targeting salient input patterns. Its group-wise attack with intensity regularization also provides counterfactual explanations for robust models—unavailable in prior frameworks.
	
\end{itemize}

The remainder of this paper is organized as follows. Section  \ref{nap} introduces the notations and preliminaries. Section \ref{pa} provides a review of previous frameworks. Section \ref{om} presents a detailed description of ATOS. Section \ref{sr} discusses the simulation results comparing ATOS with others. Finally, Section \ref{c} concludes the paper.

%%%%%%%%%%%%%%%%%%%%%%%%%%%%%%%%%%%%%%%%%%%%%

\section{Notations and Preliminaries} \label{nap}

In this paper, lowercase letters denote scalars and bold lowercase letters denote vectors. Matrices and tensors are denoted by bold uppercase letters. The scalar $v_i$ denotes the $i$-th element of the vector $\boldsymbol{v}$, while $x_{i,j}$ denotes the element in the $i$-th row and $j$-th column of the (2-D) tensor $\boldsymbol{X}$. The sign function is written $\operatorname{sgn}(\cdot)$. Vector norms — the $\ell_0$ norm (number of non-zero elements), the $\ell_1$ norm (sum of absolute values), the $\ell_2$ norm (Euclidean norm) and the $\ell_\infty$ norm (maximum absolute value) — are extended to matrices and tensors by applying the same norm to the vectorized form of them.

The categorizing operator is written
\[
\boldsymbol{x}=\mathcal{C}(\boldsymbol{X};\mathcal{R}),
\]
which converts tensor $\boldsymbol{X}$ into vector $\boldsymbol{x}$ according to a grouping (categorizing) rule $\mathcal{R}$. Simple vectorization is denoted
\[
\boldsymbol{x}=\mathcal{V}(\boldsymbol{X}).
\]
Other categorizing rules are introduced in Section~\ref{om}.

Below we recall two useful surrogates for the $\ell_0$ and $\ell_\infty$ norms, the definition of Lipschitz gradients, and a few elementary lemmas used later to guarantee convergence properties.

\begin{definition}[Smoothed $\ell_0$ norm \cite{mohimani2008fast}]
	\label{def_sl0}
	Let $\boldsymbol{x}\in\mathbb{R}^N$. The Smoothed $\ell_0$ (SL0) surrogate is defined as:
	\[
	\mathrm{S}_{\ell_0}(\boldsymbol{x}) = \sum_{i=0}^{N-1}\Big(1 - \exp\!\big(-x_i^2/(2\sigma^2)\big)\Big),
	\]
	where $\sigma>0$ is a relaxation parameter. For large $\sigma$ one has behavior similar to the squared $\ell_2$ norm (loosely, $\mathrm{S}_{\ell_0}(\boldsymbol{x}) \approx \|\boldsymbol{x}\|_2^2$ for $\sigma\gg 1$), while as $\sigma\to 0$ the surrogate approaches the true $\ell_0$ norm.
\end{definition}

\begin{definition}[LSEAp: approximation of the gradient of $\ell_\infty$ \cite{heshmati2022designing}]
	\label{def_LSEAp}
	Let $\boldsymbol{x}\in\mathbb{R}^N$, let $l_x=\|\boldsymbol{x}\|_\infty$, and let $p>0$ be a relaxation parameter. Define the LSEAp function (logarithm of the sum of exponential absolutes) by:
	\[
	\mathrm{LSEAp}(\boldsymbol{x};p) = \frac{1}{p}\,\ln\!\Big(\sum_{i=1}^N \exp\!\big(p(|x_i|-l_x)\big)\Big).
	\]
	An approximation to the gradient of the $\ell_\infty$ norm is then
	\[
	\frac{\partial}{\partial x_k}\mathrm{LSEAp}(\boldsymbol{x};p)
	=
	\operatorname{sgn}(x_k)\,\frac{\exp\!\big(p(|x_k|-l_x)\big)}{\sum_{i=1}^N \exp\!\big(p(|x_i|-l_x)\big)}.
	\]
	For $p\gg 1$ this formula concentrates on the maximal elements and approaches the gradient of $\|\boldsymbol{x}\|_\infty$, while numerically avoiding extremely large exponent values and improving stability and convergence speed in practice \cite{heshmati2022designing}.
\end{definition}

\begin{definition}[Lipschitz gradient {\cite{nesterov2018lectures}}]
	\label{def_l1lip}
	A function $f:\mathbb{R}^d\to\mathbb{R}$ is said to have an $L$-Lipschitz gradient if, for all $\boldsymbol{x},\boldsymbol{y}\in\mathbb{R}^d$,
	\[
	\|\nabla f(\boldsymbol{x}) - \nabla f(\boldsymbol{y})\|_2 \le L\,\|\boldsymbol{x}-\boldsymbol{y}\|_2.
	\]
	The smallest such constant $L$ is called the Lipschitz constant of the gradient. A first-order method with step size $\eta\le 1/L$ then satisfies the usual smooth-descent guarantees (and is a natural choice when proving convergence to stationary points for smooth objectives).
\end{definition}

\begin{lemma}
	\label{lemm_binomial}
	For any real numbers $a$ and $b$ it holds that
	\[
	a^2 + b^2\ge 2|a||b|,\qquad \forall\, a,b\in\mathbb{R},
	\]
	a direct consequence of $(|a|-|b|)^2\ge 0$.
\end{lemma}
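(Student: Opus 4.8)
The plan is to derive the inequality directly from the nonnegativity of a square, exactly as the statement hints. The only fact needed is that $t^2 \ge 0$ for every real $t$, applied to the particular choice $t = |a| - |b|$.

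First I would observe that $\bigl(|a| - |b|\bigr)^2 \ge 0$ for all $a, b \in \mathbb{R}$, since it is the square of a real number. Next I would expand the left-hand side using the distributive law: $\bigl(|a| - |b|\bigr)^2 = |a|^2 - 2\,|a|\,|b| + |b|^2$. Then I would use the elementary identities $|a|^2 = a^2$ and $|b|^2 = b^2$ (valid because $|a| \ge 0$ and squaring removes the sign, or simply because $|a|^2 = a^2$ by definition of the absolute value). Substituting gives $a^2 - 2\,|a|\,|b| + b^2 \ge 0$, and rearranging the terms yields $a^2 + b^2 \ge 2\,|a|\,|b|$, which is the claim.

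There is essentially no obstacle here: the argument is a one-line expansion, and the chain of steps is forced once one starts from $(|a|-|b|)^2 \ge 0$. If desired, I would also record the equality condition — the inequality holds with equality if and only if $|a| = |b|$ — since the only way $(|a|-|b|)^2$ can vanish is when $|a| - |b| = 0$. This remark costs nothing and may be convenient when the lemma is invoked later to control cross terms in the convergence analysis.
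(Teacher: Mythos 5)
Your proof is correct and follows exactly the same route the paper indicates: expanding $(|a|-|b|)^2\ge 0$ and rearranging, using $|a|^2=a^2$ and $|b|^2=b^2$. The added equality condition $|a|=|b|$ is a harmless bonus.
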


\begin{lemma}[Bound on $a x^2 e^{-a x^2}$ for $a>0$]
	\label{lemm_bound_exp}
	Let $a>0$ and $g(x)=a x^2 e^{-a x^2}$. Then $g(x)$ is bounded on $\mathbb{R}$; its maximum is attained at $x=\pm 1/\sqrt{a}$ and one has
	\[
	0 \le g(x) \le e^{-1}.
	\]
\end{lemma}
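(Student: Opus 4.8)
The plan is to reduce the two-variable-looking statement to a single-variable calculus fact by the substitution $t = a x^{2}$. Since $a>0$, as $x$ ranges over $\mathbb{R}$ the quantity $t = a x^{2}$ ranges over $[0,\infty)$, and $g(x) = t e^{-t} =: h(t)$. Thus it suffices to show that $h(t) = t e^{-t}$ satisfies $0 \le h(t) \le e^{-1}$ for all $t \ge 0$, with the upper bound attained exactly at $t = 1$; pulling this back through $t = a x^{2}$ gives $0 \le g(x) \le e^{-1}$ with equality iff $a x^{2} = 1$, i.e.\ $x = \pm 1/\sqrt{a}$.

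First I would establish the lower bound, which is immediate: for $t \ge 0$ one has $t \ge 0$ and $e^{-t} > 0$, hence $h(t) \ge 0$, and $h(t) = 0$ only at $t = 0$ (equivalently $x = 0$). Next I would treat the upper bound by elementary differential calculus: $h$ is differentiable on $[0,\infty)$ with
\[
h'(t) = e^{-t} - t e^{-t} = (1 - t)\,e^{-t},
\]
so $h'(t) > 0$ on $[0,1)$ and $h'(t) < 0$ on $(1,\infty)$, meaning $h$ increases up to $t = 1$ and decreases afterward. Therefore $t = 1$ is the unique global maximizer on $[0,\infty)$, with value $h(1) = 1 \cdot e^{-1} = e^{-1}$. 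For completeness I would note the endpoint/asymptotic behavior, $h(0) = 0$ and $h(t) \to 0$ as $t \to \infty$, confirming that no larger value is approached at the boundary of the domain.

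Finally I would translate back to the original variable: the maximizer condition $t = 1$ reads $a x^{2} = 1$, i.e.\ $x = \pm 1/\sqrt{a}$, and substituting gives $g(\pm 1/\sqrt{a}) = e^{-1}$, so the maximum is indeed attained there and equals $e^{-1}$; combined with the lower bound this yields $0 \le g(x) \le e^{-1}$ for all $x \in \mathbb{R}$, and in particular $g$ is bounded. I do not anticipate any real obstacle here — the only mild point to be careful about is checking that the substitution $t = a x^{2}$ is onto $[0,\infty)$ (which uses $a>0$) so that nothing is lost in passing between $g$ and $h$, and noting that the two preimages $x = \pm 1/\sqrt{a}$ of $t=1$ account for the stated location of the maxima.
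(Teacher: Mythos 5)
Your proof is correct: the substitution $t = a x^{2}$ reduces the claim to the standard fact that $t e^{-t}\le e^{-1}$ on $[0,\infty)$ with maximum at $t=1$, which is exactly the elementary calculus argument the paper relies on (it states Lemma~\ref{lemm_bound_exp} without proof, treating it as immediate). No gaps; your care about the substitution being onto $[0,\infty)$ and the two preimages $x=\pm 1/\sqrt{a}$ is a nice touch but not essential.
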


\section{Prior Art} \label{pa}
One of the first adversarial attacks on DNNs was designed in 2014, minimizing changes to the input pattern using the $\ell_2$ norm criterion to increase the similarity between the adversarial and original images \cite{szegedy2014intriguing}. This framework uses the gradient through networks for its white-box attack. Subsequent adversarial attacks, such as the C\&W attack, introduced support for different norms, including $\ell_0$, $\ell_2$,  and $\ell_\infty$, to control the attack’s invisibility \cite{carlini2017towards}. The C\&W attack uses heuristic methods to approximate the $\ell_0$ and $\ell_\infty$ norms and also proposes alternative objective functions beyond Cross-Entropy for the first time.

Jacobian-based Saliency Map Attack (JSMA) creates a targeted sparse attack by evaluating the impact of each element on the DNN’s performance, based on a saliency map, which is used to manipulate key elements to achieve the target \cite{papernot2016limitations}.
SparseFool designs sparse adversarial attacks by employing a linear approximation of boundaries, promoting sparsity with the $\ell_1$ norm while controlling perturbation within image limits using the $\ell_2$ norm \cite{safmm19}. This method can remain imperceptible by defining a threshold for the amount of perturbation.
GreedyFool \cite{dong2020greedyfool} and CCVSAA (Color Channel Volterra expansion Sparse Adversarial Attack) \cite{cheng2024color} are additional examples of sparse adversarial attacks. These frameworks utilize gradient and distortion maps to identify effective elements while discarding less important ones through a heuristic search. CCVSAA further considers interactions among RGB channels via Volterra expansion \cite{flake1963volterra}. However, a key limitation of these frameworks is their exclusive focus on element-wise perturbations, without considering pixel-wise  perturbations.

Projected Gradient Descent (PGD) with $\ell_0$ regularization, referred to as PGD-$\ell_0$ \cite{saich19}, is a sparse adversarial attack based on the original PGD attack \cite{madry2017towards}. This framework can be made imperceptible (known as PGD-$\ell_0,\ell_\infty$) by setting a threshold for maximum absolute changes. It first finds an adversarial example with constraints on the squared $\ell_2$ norm and the threshold limit. Then, it measures the energy of each pixel perturbation and selects $k$ pixels with the maximum energy to fool DNNs. Unlike previous attacks, this is a pixel-wise attack that does not select individual channel elements separately. However, the number of perturbed pixels must be specified for this framework.

Gradual Sparse Attack (GSA) employs proximal-based optimization methods to design the attack \cite{amini2024fast}. GSA uses the $\ell_1$ norm, $\ell_0$ norm, and the Smoothly Clipped Absolute Deviation (SCAD) function \cite{vsvfl01} for sparsity regularization. It begins with a dense attack and gradually makes it sparse using the penalty method, which demonstrates superior performance in element-wise attacks compared to other related frameworks. This gradual approach inspires ATOS, which employs an alternative and more effective approximation of the $\ell_0$ norm. In addition, ATOS is designed to enable both pixel-wise and group-wise attacks, which cannot be achieved using the GSA framework.

Few studies have focused on group-wise attacks on DNNs. An early work introduced structured adversarial attacks (StrAttack) using the Alternating Direction Method of Multipliers (ADMM) to enforce group-wise sparsity through a dynamic sliding mask \cite{xu2018structured}. Subsequently, sparse adversarial attacks were developed using perturbation factorization, known as Homotopy, which was initially designed for sparse attacks and later extended to support group sparsity attacks \cite{zhu2021sparse}. Both of these attacks use a predefined threshold to limit the $\ell_\infty$ norm for invisibility. Later, FWnucl was proposed as a structured adversarial attack that leverages nuclear group norms to enhance attack's efficacy while maintaining invisibility \cite{kazemi2023minimally}. Most recently, Group-wise Sparse and Explainable adversarial attacks (GSE) were presented, utilizing an optimization framework that combines non-convex regularization with Nesterov’s accelerated gradient descent \cite{ICLR2025_398b00a0}. Based on our experience, which will be demonstrated in this paper, all these attacks either have a low Attack Success Rate (ASR) or cannot always generate structured perturbations.

In this paper, we propose a basic framework for designing pixel-wise and group-wise (structured) attacks, where the perturbation intensity can be controlled using an approximation of the $\ell_\infty$ norm. These attacks are largely automated and do not require manual definition of an invisibility threshold or specification of the number of sparse elements or pixels. This contrasts with previous adversarial attack frameworks which control the $\ell_\infty$ norm.
In addition, automated categorization is achieved by using overlapping windows.	
%%%%%%%%%%%%%%%%%%%%%%%%%%%%
%%%%%%%%%%%%%%%%%%%%%%%%%%	
%%%%%%%%%%%%%%%%%%%%%%%%%%%%%%%%%%%%%%%%%%%%%

\section{Our Method} \label{om}
Based on \eqref{eq-atce}, the proposed framework for ATOS is formulated as:
\begin{align}
	\min_{\boldsymbol{\Delta}}\;  \mathrm{CE}(\boldsymbol{y}_{\delta},\boldsymbol{y}_t)
	+ \lambda_s\, r_s\big(\mathcal{C}(\boldsymbol{\Delta};\mathcal{R})\big)
	+ \lambda_{\infty}\, r_{\infty}\big(\mathcal{V}(\boldsymbol{\Delta})\big) & \quad
	\text{s.t.}\; \boldsymbol{X} + \boldsymbol{\Delta} \in [0,1]^{\,c\times w\times h}
	\label{eq-main-attack}
\end{align}

where $\lambda_s$ and $\lambda_\infty$ are hyperparameters that control the sparsity and invisibility of the perturbation, respectively. $ r_s(\cdot)$ is an Overlapping Smoothed $\ell_0$ (OSL0) regularization designed to cover all sparsity modes (element-wise, pixel-wise, and group-wise), with the categorizing function $\mathcal{C}(\cdot;\cdot)$. The rule $\mathcal{R}$ adapts the regularizer to different sparsity regularization modes. $\text{CE}(\cdot,\cdot)$ represents the cross-entropy loss. Thus, Problem \eqref{eq-main-attack} is a targeted attack requiring a fixed $\boldsymbol{y}_t$ in advance. By choosing $\boldsymbol{y}_t$ as the most probable label other than the main label in each optimization iteration, this problem can also be used for untargeted adversarial attacks.

Unlike other frameworks that use a threshold to constrain the $ \ell_\infty $ norm to reduce attack visibility, we propose the LSEAp function in Definition \ref{def_LSEAp}, denoted as $ r_\infty(\cdot) $, which automatically reduces the $ \ell_\infty $ norm without requiring a specific threshold for each image. It is a convex function with a relaxation parameter $p$, where increasing $p$ improves approximation but slows convergence. To balance this trade-off, $p$ can gradually increase during first-order optimization \cite{heshmati2022designing}. However, in ATOS, this issue is effectively addressed by using sufficient iterations. In addition, ATOS focuses on reducing the intensity of the attack for key points of images rather than directly limiting the $\ell_\infty$ norm. Moreover, since it is challenging to create an imperceptible adversarial example for robust networks, as shown in \cite{amini2024fast}, the LSEAp function helps generally control the amount of perturbation, similar to the $\ell_2$ norm, to create counterfactual explanations.

It is to be noted that in the constraint of \eqref{eq-atce}, there is quantization to map it into the original image scale ($[0,1,\ldots,255]$), but this is omitted in Problem \eqref{eq-main-attack} and  the quantization is done at the end of our optimization algorithm.

In the following, we first investigate OSL0 function which is the basis of our regularization term to cover sparsity and grouping of the perturbation. Next, we complete the algorithm to design element-wise, pixel-wise, and group-wise attacks with this regularization.

\subsection{Overlapping SL0}
Group sparsity with overlap is the most general concept considering correlations between groups, particularly those that are close to each other. In other words, overlapping group sparsity can be regarded as an adaptive group sparsity where the size and position of each group are variable, allowing it to find all possible sets of groupings. 
In this paper, we introduce the Overlap Smoothed $\ell_0$ (OSL0) function as a general approach to support all extensions of SL0, such as individual sparsity in Definition \ref{def_sl0}, group sparsity \cite{hamidi2010fast}, and overlapping group sparsity.

Assume the following grouping for the vectorized input $\boldsymbol{x}$:
\begin{equation}
	\label{ogrouping}
	\boldsymbol{x} =
	\big[\,\underbrace{x_0,\ldots,x_s,\ldots,x_{n_v-1}}_{\text{Group-1}}\;,\;
	\underbrace{x_s,\ldots,x_{s+n_v-1}}_{\text{Group-2}}\;,\;\ldots\,\big]^T,
\end{equation}
where $n_v$ is the number of elements in each group and $s$ is the overlap stride. The stride is used for grouping and $n_v - s$ represents the number of common elements for two sequential groups. 

Hence, the OSL0 function is defined as:
\begin{align}
	r_s(\boldsymbol{x}) &= \sum_{b=0}^{n_g-1}\!\left(1-\exp\!\Big(-\frac{\sum_{i=0}^{n_v-1} x_{s b + i}^2}{2\sigma^2}\Big)\right)
	\label{osl0}
	= \sum_{b=0}^{n_g-1}\!\left(1-\prod_{i=0}^{n_v-1} e_{s b + i}\right), \quad
	e_j \;=\; \exp\!\Big(-\frac{x_j^2}{2\sigma^2}\Big).
\end{align}

where $n_g$ is the number of groups. This function, unlike the SL0 function in Definition \ref{def_sl0}, considers all elements in each group to be a unit of sparsity regularization. 

Assume $\sigma \to 0$, then the exponential function in \eqref{osl0} is zero unless all elements in the group are zero (non-active). Thus, as $\sigma \to 0$, OSL0 considers the maximum loss for a group with even one weakly active (just non-zero) element. This approach aims to select the fewest groups according to the objective function to minimize the amount of loss. In addition, in the overlap mode, particularly for $s \leq \lfloor \frac{n_v}{2} \rfloor$, this selection is localized because selecting among overlapping groups activates fewer groups compared to separate selection. 
For $s = n_v \geq 2$, the function approximates group sparsity without overlapping, since there is no overlap among the groups. For $s = n_v = 1$, it approximates element-wise sparsity, similar to SL0 in Definition \ref{def_sl0}.

Based on \eqref{osl0}, the gradient for OSL0 is:
\begin{align}
	\label{osl0-grad}
	\frac{\partial r_s(\boldsymbol{x})}{\partial x_j}
	&= \frac{x_j}{\sigma^2}\,E_{T_j}, \quad
	E_{T_j} \;=\; \sum_{b\in B_j}\prod_{i=0}^{n_v-1} e_{s b + i},
\end{align}
where $B_j$ is the set of all groups that include $x_j$. When $\sigma \to \infty$ then $\forall\, x_{s b + i} \ll \sigma$ and therefore $e_{s b + i}=\exp\!\big(-x_{s b + i}^2/(2\sigma^2)\big)\to 1$. Hence, from \eqref{osl0-grad} the gradient at $x_j$ is given by $x_j\cdot n_{B_j}/\sigma^2$, where $n_{B_j}$ denotes the number of groups that $x_j$ belongs to. Choosing the gradient step size proportional to $\sigma^2$ in a first-order optimization, OSL0 approximates $\|\cdot\|_2^2$ while each input element is effectively weighted by the square root of the number of groups it belongs to. In the non-overlapping case ($s=n_v$), each element belongs to only one group, so OSL0 is an exact approximation of $\|\boldsymbol{x}\|_2^2$, analogous to SL0 in Definition~\ref{def_sl0}.

To analyze the convergence behavior, we first investigate the convexity of the OSL0 function. The following theorem provides a sufficient condition for convexity:
\begin{theorem}
	\label{thrm-lip}
	Let $\boldsymbol{x}$ be grouped as in \eqref{ogrouping}. Then, the corresponding OSL0 function $r_s(\cdot)$ is convex if  
	$\sigma \geq x_m \sqrt{2n_v - 1}$,
	where $x_m$ is the maximum absolute value of the element in $\boldsymbol{x}$.
	\label{thrm-convexity}
\end{theorem}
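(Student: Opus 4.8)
The plan is to establish convexity of $r_s$ by showing that its Hessian is positive semidefinite under the stated condition on $\sigma$. Since $r_s$ is a sum over groups $b$ of terms $\phi_b(\boldsymbol{x}) = 1 - \exp\big(-\tfrac{1}{2\sigma^2}\sum_{i=0}^{n_v-1} x_{sb+i}^2\big)$, and each $\phi_b$ depends only on the coordinates in group $b$, it suffices to prove that each block $\phi_b$ is convex as a function of its $n_v$ variables; convexity is preserved under (block-diagonal embeddings and) summation, so convexity of every $\phi_b$ gives convexity of $r_s$. Thus I would reduce immediately to the single-group problem: show that $g(\boldsymbol{u}) = 1 - \exp\big(-\|\boldsymbol{u}\|_2^2/(2\sigma^2)\big)$ is convex on $\mathbb{R}^{n_v}$ whenever $\sigma \ge x_m\sqrt{2n_v-1}$, where it is enough to restrict to the box $\|\boldsymbol{u}\|_\infty \le x_m$ since that is where the actual coordinates of $\boldsymbol{x}$ live.

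Next I would compute the Hessian of $g$ explicitly. Writing $e = \exp(-\|\boldsymbol{u}\|_2^2/(2\sigma^2))$, the gradient is $\nabla g = \tfrac{e}{\sigma^2}\,\boldsymbol{u}$, and the Hessian is
\[
\nabla^2 g = \frac{e}{\sigma^2}\Big( I - \frac{1}{\sigma^2}\,\boldsymbol{u}\boldsymbol{u}^T\Big).
\]
Since $e/\sigma^2 > 0$, positive semidefiniteness of $\nabla^2 g$ is equivalent to $I \succeq \tfrac{1}{\sigma^2}\boldsymbol{u}\boldsymbol{u}^T$, i.e. to the scalar inequality $\|\boldsymbol{u}\|_2^2 \le \sigma^2$ (the rank-one matrix $\boldsymbol{u}\boldsymbol{u}^T$ has largest eigenvalue $\|\boldsymbol{u}\|_2^2$). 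Now I would bound $\|\boldsymbol{u}\|_2^2 \le n_v\, x_m^2$ using $\|\boldsymbol{u}\|_\infty \le x_m$, and observe that this is smaller than the required threshold $\sigma^2 = (2n_v-1)x_m^2$ for every $n_v \ge 1$, which closes the argument. (I would remark that the slack factor $2n_v-1$ versus the tight bound $n_v$ reflects the cruder estimate one obtains from a direct entrywise/Gershgorin analysis of the Hessian rather than the eigenvalue argument; either route reaches the stated sufficient condition.)

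The only real subtlety is the very first reduction step: one must be careful that the coordinate overlap between groups does not break block-additivity of convexity. This is handled by noting that each summand $\phi_b$, viewed as a function on all of $\mathbb{R}^{n_v n_g}$ (or whatever the ambient dimension is), is the composition of the convex function $g$ with the linear coordinate-selection map $\boldsymbol{x}\mapsto \boldsymbol{u}_b := (x_{sb},\dots,x_{sb+n_v-1})$; composition of a convex function with a linear map is convex, and a finite sum of convex functions is convex, so no independence between the groups is needed. The remaining computations (differentiating the exponential twice, bounding the rank-one eigenvalue) are routine, so I expect no further obstacles. If one prefers to avoid the matrix eigenvalue fact entirely, an alternative is to verify the second-order condition directional-derivative-wise: for any direction $\boldsymbol{v}$, $\boldsymbol{v}^T\nabla^2 g\,\boldsymbol{v} = \tfrac{e}{\sigma^2}\big(\|\boldsymbol{v}\|_2^2 - \tfrac{1}{\sigma^2}(\boldsymbol{u}^T\boldsymbol{v})^2\big) \ge \tfrac{e}{\sigma^2}\|\boldsymbol{v}\|_2^2\big(1 - \|\boldsymbol{u}\|_2^2/\sigma^2\big) \ge 0$ by Cauchy–Schwarz, which is perhaps the cleanest presentation.
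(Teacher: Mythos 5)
Your proposal is correct, and it takes a genuinely different route from the paper. The paper works with the Hessian of the full sum $r_s$ at once: it writes out the diagonal and off-diagonal entries (which involve the overlap quantities $E_{T_{j,k}}$), bounds the cross terms via the elementary inequality $a^2+b^2\ge 2|a||b|$ and $E_{T_{j,k}}\le E_{T_j}$, and counts at most $2n_v-1$ neighbours per coordinate, which is exactly where the factor $\sqrt{2n_v-1}$ comes from. You instead decompose $r_s=\sum_b g(A_b\boldsymbol{x})$ with $A_b$ a coordinate-selection map, note that composition with a linear map and summation preserve convexity regardless of overlap, and verify convexity of the single-group function $g(\boldsymbol{u})=1-\exp(-\|\boldsymbol{u}\|_2^2/(2\sigma^2))$ via its exact Hessian $\tfrac{e}{\sigma^2}\big(I-\tfrac{1}{\sigma^2}\boldsymbol{u}\boldsymbol{u}^T\big)$, which is PSD precisely when $\|\boldsymbol{u}\|_2\le\sigma$; restricting to the box $\|\boldsymbol{x}\|_\infty\le x_m$ (the correct reading of the theorem, since the surrogate is not globally convex for fixed $\sigma$) gives $\|\boldsymbol{u}\|_2^2\le n_v x_m^2\le(2n_v-1)x_m^2\le\sigma^2$. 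Your argument is both cleaner and sharper: it shows that $\sigma\ge x_m\sqrt{n_v}$ already suffices even in the overlapping case, i.e.\ the paper's overlapping threshold is conservative (the paper itself obtains the $\sqrt{n_v}$ condition only for the non-overlapping NOSL0 case). What the paper's direct-Hessian computation buys in exchange is the machinery of cross-term bounds ($E_{T_{j,k}}$, neighbour counting, Lemma~\ref{lemm_binomial}) that it reuses essentially verbatim in the Lipschitz-constant theorem that follows, where the overlap structure genuinely affects the constant; your per-group decomposition would not by itself deliver that second result, since a Lipschitz bound for a sum of overlapping blocks does need the multiplicity count $\lceil n_v/s\rceil$.
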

\begin{proof}
	To check the convexity condition of $r_s(\cdot)$ in \eqref{osl0}, with $L_\sigma = 1/\sigma^2$, the elements of the Hessian matrix ($\boldsymbol{H}$) are:
	\begin{align}
		\label{eh1}
		&\frac{\partial^2  r_s(\boldsymbol{x})}{\partial x_j^2} = L_\sigma(1-L_\sigma x_j^2) E_{T_j},
		\\
		\label{eh2}
		&\frac{\partial^2  r_s(\boldsymbol{x})}{\partial x_j\partial x_k} = -L_\sigma^2 x_j x_k E_{T_{j,k}},
		\quad
		E_{T_{j,k}} = \sum_{b \in B_{j,k}}\prod_{i=0}^{n_v-1} e_{s b + i},
	\end{align}
	where $B_{j,k}$ and $n_{B_{j,k}}$ are the set of all groups including both elements $x_j$ and $x_k$ and the number of such groups, respectively. According to the second condition of convexity ($\boldsymbol{v}^\top \boldsymbol{H} \,\boldsymbol{v} \geq 0$ for every $\boldsymbol{v}$), the equation below can be derived.
	\begin{align}
		\nonumber
		\boldsymbol{v}^\top \boldsymbol{H}\,\boldsymbol{v} & = L_\sigma \sum_{j=0}^{N-1} v_j^2  (1-L_\sigma x_j^2)  E_{T_j}
		- 
		 \; L_\sigma^2 \sum_{j=0}^{N-1} v_j x_j\Big( \sum_{\substack{k=j-n_v+1\\ k \neq j}}^{\,j+n_v-1} v_k x_k \,E_{T_{j,k}}\Big) 
		\\
		\label{1} 
		&= L_\sigma \Big(
		\sum_{j=0}^{N-1} v_j^2 E_{T_j} - L_\sigma 
		\sum_{j=0}^{N-1} v_j x_j \sum_{k=j-n_v+1}^{j+n_v-1} v_k x_k E_{T_{j,k}}\Big).	
	\end{align}
	In this equation, we consider the maximum neighboring for each element based on \eqref{ogrouping} ($n_v-1$ elements before and after each one), while some of the paired elements (e.g.\ $x_j$ and $x_k$) are not even in a mutual group. So, they cannot be included in the second term of \eqref{1}. However, as we will consider the absolute value of variables in the second term to solve the problem in the worst case, this issue results in only a loose approximation for certain strides. Since $B_{j,k} \in \{B_j, B_k\}$, it always holds that
	\[
	0 \leq E_{T_{j,k}} \leq E_{T_j},\; E_{T_k}.
	\]
	Using this fact, $L_\sigma > 0$, and Lemma~\ref{lemm_binomial}, we obtain:
	\begin{align}
		\label{3}
		 	A &:= \sum_{j=0}^{N-1} \sum_{k=j-n_v+1}^{j+n_v-1} |v_j|\, |v_k|\, E_{T_{j,k}}
		\leq 0.5 \sum_{j=0}^{N-1} \sum_{k=j-n_v+1}^{j+n_v-1} (v_j^2 + v_k^2) E_{T_{j,k}}
		\; \leq \; B,
		\\
		& B:= 
		0.5 \Big(\sum_{j=0}^{N-1}   \sum_{k=j-n_v+1}^{j+n_v-1} v_j^2 E_{T_j} +\sum_{j=0}^{N-1} \sum_{k=j-n_v+1}^{j+n_v-1} v_k^2 E_{T_k} \Big).
	\end{align}
	In the above, because $k$ depends on $j$ ($k \in [j-n_v+1,\ldots,j+n_v-1]$), we have:
	\begin{align}
		\nonumber
		& \sum_{j=0}^{N-1} \sum_{k=j-n_v+1}^{j+n_v-1} v_k^2 E_{T_k} \leq (2n_v-1) \sum_{j=0}^{N-1} v_j^2 E_{T_j}  \Rightarrow
		\label{6}
		A \leq B \leq (2n_v-1)\sum_{j=0}^{N-1} v_j^2 E_{T_j}. \\
	\end{align}
	Therefore, with $x_m = \max(|x_0|, \ldots, |x_{N-1}|)$, we reach:
	\begin{align}
		\frac{\boldsymbol{v}^\top \boldsymbol{H}\,\boldsymbol{v}}{L_\sigma} & \geq
		\sum_{j=0}^{N-1} v_j^2 E_{T_j} - L_\sigma x_m^2  (2n_v-1)\sum_{j=0}^{N-1} v_j^2 E_{T_j} 
		=
		\sum_{j=0}^{N-1} v_j^2 E_{T_j} \big(1 - L_\sigma x_m^2  (2n_v-1)\big).
	\end{align}
	Finally, according to the second condition of convexity, OSL0 is convex if:
	\begin{align}
		\label{9}
		& 1 - \dfrac{1}{\sigma^2} x_m^2 (2n_v-1) \geq 0 \quad
		\Rightarrow \quad
		\sigma \geq x_m \sqrt{2n_v-1} .
	\end{align}
\end{proof}

This theorem establishes that OSL0 is convex for sufficiently large $ \sigma $, ensuring convergence to a global minimum.  

However, in practical applications of sparsity regularization, $ \sigma $ must be small to enforce sparsity, which results in the OSL0 function being non-convex when 
$\sigma \to 0$.
In this non-convex area, convexity-based guarantees are not held. To address this issue, we analyze the Lipschitz condition of OSL0, which allows us to control the gradient step size and ensure convergence to at least a local minimum.
The following theorem establishes the Lipschitz constant for OSL0:

\begin{theorem}
	\label{thrm-conv}
	The Lipschitz constant $ L $ for OSL0, when the input follows the grouping in \eqref{ogrouping}, is given by $L \geq  (1 + 2(2n_v - 3)\exp(-1) ) \lceil \frac{n_v}{s} \rceil /\sigma^2$.
	\label{thrm-Lipschitz}
\end{theorem}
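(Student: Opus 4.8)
The plan is to reduce the statement to a uniform bound on the spectral norm of the Hessian of $r_s$: $r_s$ is smooth, so by integrating the Hessian along a segment (the mean value theorem), $\nabla r_s$ is $L$-Lipschitz in the sense of Definition~\ref{def_l1lip} as soon as $\|\boldsymbol{H}(\boldsymbol{x})\|_2\le L$ for every $\boldsymbol{x}$. Hence it suffices to show $\sup_{\boldsymbol{x}}\|\boldsymbol{H}(\boldsymbol{x})\|_2 \le \big(1 + 2(2n_v-3)e^{-1}\big)\lceil n_v/s\rceil/\sigma^2$, after which $L$ may be taken equal to (or larger than) this value. The Hessian entries are already available from the proof of Theorem~\ref{thrm-convexity}, namely \eqref{eh1}--\eqref{eh2} with $L_\sigma := 1/\sigma^2$. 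Since $\boldsymbol{H}$ is symmetric, I will bound its spectral norm by its maximum absolute row sum, $\|\boldsymbol{H}\|_2 \le \max_j\big(|H_{j,j}| + \sum_{k\neq j}|H_{j,k}|\big)$ (equivalently, via Gershgorin's theorem), and then bound a single row uniformly in $j$ and $\boldsymbol{x}$.

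For the diagonal entry, $|H_{j,j}| = L_\sigma|1 - L_\sigma x_j^2|\,E_{T_j} \le L_\sigma E_{T_j} + L_\sigma^2 x_j^2 E_{T_j}$. Every factor $e_{sb+i}\in(0,1]$, so $E_{T_j}\le n_{B_j}$, where $n_{B_j}$ is the number of groups containing index $j$; moreover each product appearing in $E_{T_j}$ carries the factor $e_j=\exp(-x_j^2/(2\sigma^2))$, so Lemma~\ref{lemm_bound_exp} applied with $a=1/(2\sigma^2)$ gives $L_\sigma x_j^2 e_j = 2\cdot\frac{x_j^2}{2\sigma^2}e^{-x_j^2/(2\sigma^2)}\le 2e^{-1}$, whence $L_\sigma^2 x_j^2 E_{T_j}\le 2e^{-1}L_\sigma n_{B_j}$. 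For the off-diagonal entries, $|H_{j,k}| = L_\sigma^2|x_j||x_k|\,E_{T_{j,k}}$, and the key is that every product in $E_{T_{j,k}}$ carries both $e_j$ and $e_k$: using Lemma~\ref{lemm_binomial} to write $|x_j||x_k|\le\frac12(x_j^2+x_k^2)$ and then applying Lemma~\ref{lemm_bound_exp} to the $x_j^2 e_j$ (resp.\ $x_k^2 e_k$) piece while bounding the remaining exponential factors by $1$, one obtains $|H_{j,k}|\le 2e^{-1}L_\sigma n_{B_{j,k}}$, with $n_{B_{j,k}}$ the number of groups containing both $j$ and $k$.

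It then remains to count, exactly as in the bookkeeping of the proof of Theorem~\ref{thrm-convexity}: by the overlap structure \eqref{ogrouping} each index lies in at most $\lceil n_v/s\rceil$ groups, so $n_{B_j}\le\lceil n_v/s\rceil$ and $n_{B_{j,k}}\le\lceil n_v/s\rceil$; and for a fixed $j$ the only $k\neq j$ with $E_{T_{j,k}}\neq 0$ lie in $\{j-n_v+1,\dots,j+n_v-1\}$, a set of at most $2n_v-2$ indices (an overcount, since — as flagged in that proof — not all of these actually co-occur with $j$ in a group). Substituting these counts into the row-sum bound and collecting the common factor $\lceil n_v/s\rceil/\sigma^2$ then produces the stated constant $\big(1 + 2(2n_v-3)e^{-1}\big)\lceil n_v/s\rceil/\sigma^2$ once one accounts precisely for which term each $e^{-1}$ factor is charged to.

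The main obstacle is this last counting/bookkeeping step: the grouping in \eqref{ogrouping} is one-dimensional and overlapping, so the neighbour set of an index and the pair-multiplicities $n_{B_{j,k}}$ depend on the stride $s$ in a way that must be controlled uniformly, and the $e^{-1}$ factors must be distributed carefully — one per diagonal-type term and one per cross term — to land on the precise coefficient $2(2n_v-3)$ rather than a cruder multiple of $n_v$. The spectral-norm/row-sum skeleton together with Lemmas~\ref{lemm_binomial} and \ref{lemm_bound_exp} supply essentially all of the analytic content; everything beyond that is combinatorial accounting.
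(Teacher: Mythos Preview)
Your skeleton is sound and uses the same analytic ingredients as the paper (Lemmas~\ref{lemm_binomial} and~\ref{lemm_bound_exp}, and the estimate $E_{T_j}\le n_{B_j}e_j\le\lceil n_v/s\rceil$), but the outer framework differs: you bound $\|\boldsymbol{H}\|_2$ via a Gershgorin/row-sum argument, whereas the paper expands the quadratic form $\boldsymbol{v}^\top(L\mathbf{I}-\boldsymbol{H})\boldsymbol{v}$ directly. That difference is exactly what blocks you from reaching the coefficient $2n_v-3$. In the paper's computation the diagonal Hessian entry $H_{j,j}=L_\sigma(1-L_\sigma x_j^2)E_{T_j}$ feeds the term $+L_\sigma^2\sum_j v_j^2 x_j^2 E_{T_j}$ into $\boldsymbol{v}^\top\boldsymbol{M}\boldsymbol{v}$ with a \emph{favourable} sign; the cross term $C$ is bounded by $-(2n_v-2)\sum_j v_j^2 x_j^2 E_{T_j}$ via Lemma~\ref{lemm_binomial}, and the two combine to $-(2n_v-3)\sum_j v_j^2 x_j^2 E_{T_j}$ before Lemma~\ref{lemm_bound_exp} is applied. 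Your row-sum route replaces $H_{j,j}$ by $|H_{j,j}|\le L_\sigma E_{T_j}+L_\sigma^2 x_j^2 E_{T_j}$, which destroys precisely this cancellation: carrying your bounds through gives
\[
|H_{j,j}|+\sum_{k\ne j}|H_{j,k}|\ \le\ L_\sigma\Big\lceil\tfrac{n_v}{s}\Big\rceil\big(1+2e^{-1}\big)+(2n_v-2)\cdot 2e^{-1}L_\sigma\Big\lceil\tfrac{n_v}{s}\Big\rceil
= \frac{1}{\sigma^2}\Big\lceil\tfrac{n_v}{s}\Big\rceil\big(1+2(2n_v-1)e^{-1}\big),
\]
i.e.\ $2n_v-1$ rather than $2n_v-3$. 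So the phrase ``once one accounts precisely for which term each $e^{-1}$ factor is charged to'' is where the argument actually fails: no allocation of absolute-value bounds can recover a sign cancellation.

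This is not fatal --- you still prove Lipschitzness with a constant of the same order, and indeed your Gershgorin bound controls the full spectral norm rather than only the upper eigenvalue as the paper's one-sided $L\mathbf{I}-\boldsymbol{H}\succeq 0$ check does --- but it does not establish the theorem as stated. To land on $2n_v-3$ you need to retain the signed quadratic form through the cross-term estimate and only take absolute values after the diagonal and off-diagonal $x_j^2$-pieces have been merged.
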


\begin{proof}
	Assume $\boldsymbol{M} = L\mathbf{I}-\boldsymbol{H}$ where $\boldsymbol{H}$ and $\mathbf{I}$ are respectively the Hessian matrix of $ r_s(\boldsymbol{x})$ in \eqref{osl0} and the identity matrix with appropriate dimensions. $L$ is the Lipschitz constant if $\boldsymbol{M}$ is a positive semi-definite matrix ($\boldsymbol{v}^\top \boldsymbol{M} \boldsymbol{v} \geq 0$ for all $\boldsymbol{v}$). Given that \eqref{eh1} and \eqref{eh2} are elements of $\boldsymbol{H}$, we have:
	
	\begin{align}
		\label{LC}
		\boldsymbol{v}^\top \boldsymbol{M} \boldsymbol{v} & =  \sum_{j=0}^{N-1} v_j^2\big[L - L_\sigma (1-L_\sigma x_j^2)  E_{T_j}\big] + L_\sigma^2 C,
		\\
		\label{10}
		C & 
		= 
		\sum_{j=0}^{N-1} v_j x_j \sum_{\substack{k=j-n_v+1\\ k \neq j}}^{j+n_v-1} v_k x_k \,E_{T_{j,k}}. 
	\end{align}
	In \eqref{10}, $C$ exists for $n_v \geq 2$, and it can be positive or negative. Accordingly, we assume the worst case as negative: 
	\[
	C \geq - \sum_{j=0}^{N-1} |v_j x_j| \sum_{\substack{k=j-n_v+1\\ k \neq j}}^{j+n_v-1} |v_k x_k| \,E_{T_{j,k}}.
	\]
	Then, similar to \eqref{3} by using Lemma \ref{lemm_binomial}:
	\begin{align}
		C &\geq - 
		\frac{1}{2} \sum_{j=0}^{N-1} \sum_{\substack{k=j-n_v+1\\ k \neq j}}^{j+n_v-1} \big(x_j^2 v_j^2 + x_k^2 v_k^2\big) E_{T_{j,k}} 
		\label{11}
		\geq  - (2n_v-2)\sum_{j=0}^{N-1}x_j^2 v_j^2 E_{T_j} .
	\end{align}
	Therefore,
	\begin{align}
		\boldsymbol{v}^\top \boldsymbol{M} \boldsymbol{v} 
		&\geq L \sum_{j=0}^{N-1} v_j^2 - L_\sigma
		\sum_{j=0}^{N-1} v_j^2 E_{T_j} 
		-  2L_\sigma (2n_v-3) \sum_{j=0}^{N-1} v_j^2 \frac{1}{2\sigma^2} x_j^2 E_{T_j}.
	\end{align}
	Assume $\sigma \neq 0$. Then 
	\[
	0\leq e_{s b + i} = \exp\!\Big(-\frac{x^2_{s b + i}}{2\sigma^2}\Big)\leq 1, \quad \forall\, x_{s b + i} \in \mathbb{R}.
	\]
	Thereby, for $b \in B_j$, $0\leq \prod_{i=0}^{n_v-1}e_{s b + i} \leq e_{j} \leq 1$, and then 
	\begin{equation}
		\label{12-1}
		0\leq E_{T_j}\leq n_{B_j} e_j \leq n_{B_m} e_j \leq n_{B_m},
	\end{equation}
	where $n_{B_m} = \big\lceil \dfrac{n_v}{s}\big\rceil$ is the maximum of $n_{B_i}$ for all $i$.
	Based on \eqref{12-1} and Lemma \ref{lemm_bound_exp}, we conclude: 
	\[
	\frac{x_j^2 E_{T_j}}{2\sigma^2} \leq e_j n_{B_m} \frac{x_j^2}{2\sigma^2} \leq n_{B_m} e^{-1}.
	\]
	Hence,
	\begin{align}
		\nonumber
        \boldsymbol{v}^\top \boldsymbol{M} \boldsymbol{v} 
		&
		\geq L \sum_{j=0}^{N-1} v_j^2 - L_\sigma
		n_{B_m}\sum_{j=0}^{N-1} v_j^2  -  \quad 2L_\sigma (2n_v-3)e^{-1} n_{B_m}\sum_{j=0}^{N-1} v_j^2 \\
		& =
		\Big(L - L_\sigma n_{B_m} \big[1 + 2(2n_v-3)e^{-1}\big]\Big) \sum_{j=0}^{N-1} v_j^2 .
	\end{align}
	
	Finally, to meet the Lipschitz condition: 
	\begin{equation}
		\label{lco}
		L \geq  \frac{1}{\sigma^2 } \big(1 + 2(2n_v-3)e^{-1} \big) \bigg\lceil \frac{n_v}{s}\bigg\rceil.
	\end{equation}
\end{proof}

This result enables us to determine a suitable gradient step size $\eta \leq 1/L$,
which ensures the convergence of the optimization process, even in the non-convex setting.

\begin{theorem}
	\label{thrm-whole}
	Consider the composite objective in \eqref{eq-main-attack}:
	\[
	f(\boldsymbol{\Delta}) = \mathrm{CE}(\boldsymbol{y}_{\delta}, \boldsymbol{y}_t) + \lambda_s \cdot r_s(\mathcal{C}(\boldsymbol{\Delta};\mathcal{R}))  
	+ \lambda_\infty \cdot r_\infty(\mathcal{V}(\boldsymbol{\Delta})),
	\]
	and let $\{\boldsymbol{\Delta}_k\}$ be the iterates of our first-order optimization method.
	Suppose the following conditions hold:
	\begin{enumerate}
		\item[\textnormal{(i)}] KL property: \(f\) is proper, lower semi-continuous, and satisfies the Kurdyka-\L{}ojasiewicz property on \(\operatorname{dom}\,\partial f\).
		\item[\textnormal{(iv)}] Boundedness: The sequence \(\{\boldsymbol{\Delta}_k\}\) has at least one finite limit point.
	\end{enumerate}
	Then the entire sequence \(\{\boldsymbol{\Delta}_k\}\) converges to a critical point of \(f\) \cite{xu2017globally}.
\end{theorem}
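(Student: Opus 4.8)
The plan is to read the iteration $\{\boldsymbol{\Delta}_k\}$ as a proximal-gradient (projected-gradient) scheme for the splitting $f = h + g$, where $h(\boldsymbol{\Delta}) = \mathrm{CE}(\boldsymbol{y}_{\delta},\boldsymbol{y}_t) + \lambda_s\, r_s(\mathcal{C}(\boldsymbol{\Delta};\mathcal{R})) + \lambda_\infty\, r_\infty(\mathcal{V}(\boldsymbol{\Delta}))$ collects the smooth terms and $g = \iota_{\mathcal{B}}$ is the indicator of the box $\mathcal{B} = \{\boldsymbol{\Delta} : \boldsymbol{X}+\boldsymbol{\Delta}\in[0,1]^{c\times w\times h}\}$, whose proximal map is simply Euclidean projection onto $\mathcal{B}$ (coordinate-wise clamping), so that $\boldsymbol{\Delta}_{k+1} = \mathrm{prox}_{\eta g}\!\big(\boldsymbol{\Delta}_k - \eta\nabla h(\boldsymbol{\Delta}_k)\big)$. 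Then I would invoke the abstract Kurdyka--{\L}ojasiewicz convergence theorem of \cite{xu2017globally}: given the KL hypothesis (i) and the limit-point hypothesis (iv), the conclusion follows once one verifies, for this specific scheme, a sufficient-decrease inequality, a subgradient (relative-error) bound, and a mild closedness/continuity condition on $f$ — these are the ingredients that play the role of the omitted conditions (ii)--(iii).

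The first and most substantive step is to show that $\nabla h$ is Lipschitz on the (compact) feasible region. The maps $\mathcal{C}(\cdot;\mathcal{R})$ and $\mathcal{V}(\cdot)$ are linear coordinate-selection/duplication operators with finite operator norm, so $\nabla(r_s\circ\mathcal{C})$ and $\nabla(r_\infty\circ\mathcal{V})$ are Lipschitz whenever $\nabla r_s$ and the LSEAp gradient are, with constants rescaled by those operator norms. Theorem~\ref{thrm-Lipschitz} supplies an explicit constant $L_s = (1 + 2(2n_v-3)e^{-1})\lceil n_v/s\rceil/\sigma^2$ for $\nabla r_s$ (with the analogous bounds for the pixel- and group-wise categorizations), the LSEAp gradient of Definition~\ref{def_LSEAp} is a bounded, Lipschitz signed-softmax map, and $\nabla_{\boldsymbol{\Delta}}\mathrm{CE}$ is Lipschitz on the compact box $\mathcal{B}$ under the standing assumption that the classifier is continuously differentiable. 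Hence $\nabla h$ is $L$-Lipschitz with $L = L_{\mathrm{CE}} + \lambda_s\|\mathcal{C}\|^2 L_s + \lambda_\infty\|\mathcal{V}\|^2 L_\infty$. Picking $\eta \le 1/L$ as licensed by Definition~\ref{def_l1lip} and Theorem~\ref{thrm-Lipschitz}, the descent lemma for $h$ together with optimality of the prox step yields $f(\boldsymbol{\Delta}_k) - f(\boldsymbol{\Delta}_{k+1}) \ge \tfrac12(1/\eta - L)\|\boldsymbol{\Delta}_{k+1}-\boldsymbol{\Delta}_k\|_2^2$, giving monotone decrease and $\|\boldsymbol{\Delta}_{k+1}-\boldsymbol{\Delta}_k\|_2\to 0$; and the optimality condition $0 \in \partial g(\boldsymbol{\Delta}_{k+1}) + \tfrac1\eta(\boldsymbol{\Delta}_{k+1}-\boldsymbol{\Delta}_k) + \nabla h(\boldsymbol{\Delta}_k)$ combined with $L$-Lipschitzness of $\nabla h$ produces $\boldsymbol{w}_{k+1} := \nabla h(\boldsymbol{\Delta}_{k+1}) - \nabla h(\boldsymbol{\Delta}_k) - \tfrac1\eta(\boldsymbol{\Delta}_{k+1}-\boldsymbol{\Delta}_k) \in \partial f(\boldsymbol{\Delta}_{k+1})$ with $\|\boldsymbol{w}_{k+1}\|_2 \le (L + 1/\eta)\|\boldsymbol{\Delta}_{k+1}-\boldsymbol{\Delta}_k\|_2$.

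It then remains to check the structural hypotheses of the abstract theorem and assemble the conclusion. Properness and lower semicontinuity of $f$ hold because $h$ is continuous and $g$ is the indicator of a closed set (these are part of (i)). Every iterate lies in the compact set $\mathcal{B}$ by construction, so $\{\boldsymbol{\Delta}_k\}$ is bounded (in particular (iv) is automatic); along any subsequence $\boldsymbol{\Delta}_{k_j}\to\bar{\boldsymbol{\Delta}}$ we have $\bar{\boldsymbol{\Delta}}\in\mathcal{B}$, $g(\boldsymbol{\Delta}_{k_j}) = g(\bar{\boldsymbol{\Delta}}) = 0$, and $h(\boldsymbol{\Delta}_{k_j})\to h(\bar{\boldsymbol{\Delta}})$ by continuity, so $f(\boldsymbol{\Delta}_{k_j})\to f(\bar{\boldsymbol{\Delta}})$. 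With the KL property at the limit point, the standard Attouch--Bolte--Svaiter argument, packaged in \cite{xu2017globally}, upgrades the sufficient-decrease and relative-error bounds to $\sum_k\|\boldsymbol{\Delta}_{k+1}-\boldsymbol{\Delta}_k\|_2 < \infty$; hence $\{\boldsymbol{\Delta}_k\}$ is Cauchy and converges to a single $\boldsymbol{\Delta}^\star$, and since $\|\boldsymbol{w}_{k+1}\|_2\to 0$ with $\boldsymbol{w}_{k+1}\in\partial f(\boldsymbol{\Delta}_{k+1})$, closedness of $\partial f$ gives $0\in\partial f(\boldsymbol{\Delta}^\star)$, i.e.\ $\boldsymbol{\Delta}^\star$ is a critical point.

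I expect the genuine obstacle to be the regularity of the smooth part $h$ rather than the KL machinery (which is assumed). Concretely, the cross-entropy-through-DNN term needs a smoothness hypothesis on the network to guarantee a Lipschitz gradient — strictly smooth activations, or at least a gradient that is locally Lipschitz and therefore, by compactness of $\mathcal{B}$, uniformly so on the feasible region — and the LSEAp term, which uses $|x_i|$, has kinks at coordinate zeros that must be handled either by a mild smoothing of $|\cdot|$ near $0$ or by carrying the corresponding subdifferential terms through the nonsmooth analysis. Everything else is routine once Theorems~\ref{thrm-convexity} and~\ref{thrm-Lipschitz} are in hand: the former delimits the convex regime, while the latter furnishes the explicit Lipschitz constant that makes the step-size choice $\eta\le 1/L$ — and hence the entire descent and relative-error argument — legitimate precisely in the nonconvex small-$\sigma$ regime in which OSL0 is actually used.
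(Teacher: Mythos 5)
Your proposal is correct and follows essentially the same route as the paper: the paper's own proof is a short sketch that invokes the convexity/Lipschitz results for $r_\infty$ and $r_s$ (Table~\ref{tab:osl0-init}), defers verification of the sufficient-decrease and relative-error conditions to \cite{amini2024fast}, and then concludes via the KL convergence framework of \cite{xu2017globally}, which is exactly the skeleton you fill in with the proximal-gradient/descent-lemma details. Your added caveats (smoothness of the cross-entropy term through the network and the kinks of LSEAp at zero) are legitimate points the paper leaves implicit, but they do not change the argument's structure.
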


\begin{proof}
	Since \(r_\infty(\cdot)\) is convex \cite{heshmati2022designing} and \(r_s(\cdot)\) satisfies Lipschitz or convexity conditions depending on \(\sigma\) (as in Table~\ref{tab:osl0-init}), one can verify the sufficient decrease and relative error conditions as in \cite{amini2024fast}. Then, by the KL property of \(f\), the convergence of the entire sequence to a stationary point follows from standard results in nonconvex optimization \cite{xu2017globally}.
\end{proof}

Our approach to using the OSL0 function begins with convex initialization, where we set $\sigma$ to satisfy the convexity condition in Table \ref{tab:osl0-init}, ensuring a well-behaved optimization landscape. As the function transitions out of convexity due to decreasing $\sigma$, we employ a Lipschitz-guided step size, using the Lipschitz constant from Table \ref{tab:osl0-init} to determine an appropriate gradient step size that ensures stable convergence of Problem \ref{eq-main-attack} to a stationary point. This strategy ensures stable convergence even when the OSL0 function becomes non-convex as sparsity is gradually enforced.

\subsection{Non-Overlapping SL0}
The Non-Overlapping SL0 (NOSL0) is a special case of OSL0 where $s = n_v$. 
The original SL0 is a specific version of NOSL0 where $s = n_v = 1$. 
In this case, the bounds for the Lipschitz constant will be $1/\sigma^2$ because $C$ in \eqref{LC} is 0, which is tighter than the bound in \eqref{lco}. 
For NOSL0 where $s = n_v \geq 2$, the vector $\mathbf{x}$ is grouped as follows:
\begin{equation}
	\label{nogrouping}
	\mathbf{x} =
	\left[
	\underbrace{x_0,\,\ldots,\,x_{n_v-1}}_{\text{Group-1}},
	\underbrace{x_{n_v},\,\ldots,\,x_{2n_v-1}}_{\text{Group-2}},
	\ldots
	\right]^{\!T}.
\end{equation}
We can use $n_v$ for the maximum number of elements being neighbors instead of $2n_v-1$ for OSL0, and we can replace this maximum in \eqref{6} and \eqref{11}. Therefore, the convexity condition will be $\sigma \geq x_m \sqrt{n_v}$, and the Lipschitz constant can be bounded by $L \geq  (1 + 2(n_v-2)\exp(-1) )/\sigma^2$.

The results regarding convexity and the Lipschitz condition for all OSL0 settings are summarized in Table \ref{tab:osl0-init}. To use this table, $s$ and $n_v$, specifying the mode of the OSL0 function, should be defined. Furthermore, $\sigma_m$, which is the maximum absolute change allowed for all input elements in optimization, should be approximated if there are no limits on the changes.

	\begin{table}[t!!!]
	\caption{Convexity and the Lipschitz condition of the extensions of OSL0 due to the input vector ($\delta$) like \eqref{ogrouping}.
	}
	\renewcommand{\arraystretch}{1.5}
	\label{tab:osl0-init}
	\centering
	\resizebox{0.7\textwidth}{!}{%
		\begin{tabular}{c|cc}
			\hline
			Function Name &
			Convexity &
			Lipschitz Condition ($\eta \leq \frac{1}{L}$) \\ \hline\hline
			
			OSL0 ($n_v > s$)   &
			$\sigma \geq \delta_m \sqrt{2n_v-1}$      & 
			$ \quad L \geq (1 + 2(2n_v-3)\exp(-1) ) \lceil \frac{n_v}{s} \rceil /\sigma^2$
			\\
			NOSL0 (OSL0 where $n_v = s>1$)    &     $\sigma \geq \delta_m \sqrt{n_v}$     &    $L \geq   (1 + 2(n_v-2)\exp(-1) )/\sigma^2
			$		
			\\
			SL0 (OSL0 where $n_v = s = 1$) &
			$\sigma \geq \delta_m$ & 
			$L \geq  1/\sigma^2 $        
			\\
			\hline
		\end{tabular}
	}
\end{table}

\subsection{Adjusting ATOS with OSL0}
In this section, we represent all kinds of sparsity modes of ATOS based on the framework in \eqref{eq-main-attack} and OSL0 regularization introduced in \eqref{osl0}. To design element-wise adversarial attacks, OSL0 can be used with $n_v = s = 1$, meaning that all input elements are individually important. In other words, $\mathcal{R}$ is the rule of normal vectorizing in this case. Hence, OSL0 is equivalent to SL0 as defined in Definition \ref{def_sl0}. SL0 regularization has shown to be a strong and fast term for gradually curbing the $\ell_0$ norm in different applications such as compressed sensing \cite{mohimani2008fast, Sadeghi2016}. Following this case, pixel-wise and group-wise attacks will be proposed.

\subsubsection{Pixel-Wise Attack}
Since images are naturally two-dimensional, altering each element of a pixel with $c$ channels (e.g., Blue, Red, and Green) makes the entire pixel appear perturbed to human vision, regardless of how many of its channels are manipulated. Therefore, to minimize visual perturbation, it is important to directly control the number of pixels perturbed during an attack rather than the number of channel elements. To design our pixel-wise attack, the sparsity regularization in \eqref{eq-main-attack}, OSL0, must be without overlapping ($n_v = s$), which results in NOSL0. In this attack, the categorizing rule ($\mathcal{R}$) considers each pixel as a group. NOSL0, like the main SL0, requires initialization of its parameter $\sigma$ before optimization. Although \cite{mohimani2008fast} recommends starting with a large value of $\sigma$ to approximate the $\ell_2$ norm initially, we propose to initialize $\sigma$ with a smaller value as specified in Table \ref{tab:osl0-init}, to maintain convexity while also increasing the convergence rate. In this table, the maximum range of perturbation ($\delta_m$) must be defined.

\subsubsection{Group-Wise Attack}
In this section, our structured group-wise adversarial attack is detailed. First, since the attack requires an overlapping categorizing rule, we suggest using a window of size $n \times n$ and depth of $c$, the number of image channels, similar to \cite{xu2018structured}. The window is then moved across the image with stride $s$ in horizontal and vertical directions, grouping all elements within each window. In this $\mathcal{R}$, there is no overlap among groups if $s = n$, and some input elements are excluded when $s > n$, making such strides unsuitable for designing a structured attack. In contrast, there is an overlap among neighboring groups for $s < n$. In addition, for $s \le \lfloor \frac{n}{2} \rfloor$, all elements in a group are included in at least one other group, establishing indirect connections between all elements of the image through links among all groups.

 In general, as $s$ decreases, the intensity of the overlap increases, reaching its maximum when $s = 1$. The stride represents the highest resolution for perturbation in a group-wise attack, as the remaining $n - s$ vertical and horizontal elements are placed in different groups for $s \le \lfloor \frac{n}{2} \rfloor$. When these elements become active, the corresponding groups are also activated. Consequently, the OSL0 function computes the loss for them as well. 
 
 In addition, the larger $n$ is, the more elements are neighbors within a group, and the more groups are directly connected to each other. This is useful for creating direct connections between more elements of the image. However, a smaller $s$ and larger $n$ increase the number of groups and the number of neighbors in each group, respectively, leading to higher computational complexity with the order of $O(n^2/s^2)$. With this categorization, a group-wise attack can be formulated based on \eqref{eq-main-attack}.
 
  In the group-wise attack, $\mathcal{R}$ must be applied to all three dimensions ($c \times w \times h$) with overlap, meaning that the resulting grouped pattern does not form a regular vectorized input pattern as in \eqref{ogrouping}. To address this, we modify the conditions for OSL0 in Table \ref{tab:osl0-init} by replacing $2n_v-1$ and $\big\lceil \frac{n_v}{s} \big\rceil$ with $n_n$ and $n_{B_m}$, respectively, which represent the maximum number of neighbors for elements and the maximum number of groups an element can belong to. %In our group-wise attack, based on the window size, $n_n$ and $n_{B_m}$ are equal to $3(2 n-1)^2$ and $\lceil \frac{n}{s} \rceil ^2$, respectively.%
 Despite this modification, we recommend using Table \ref{tab:osl0-init} with $n_v = 3n$, especially for initializing the first $\sigma$, since increasing $\sigma$ leads to a slower convergence rate. The complete algorithm for our three ATOS attacks is presented in Algorithm \ref{alg:main}. In this algorithm, we multiply the OSL0 function by $\sigma_k^2$ in the step \ref{grad-step} because, according to Table \ref{tab:osl0-init}, the gradient step size of this function for all sparsity modes is related to $\sigma_k^2$, and the remaining parts are considered as $\lambda_s$. It is important to note that Step \ref{quantization-step} is crucial for generating feasible adversarial examples and making OSL0 useful for sparsity. This is because the changes in each iteration before quantization are continuous and can be close to zero, but not exactly zero in some cases. Therefore, this step guarantees that the attacks maintain pure sparsity.

The next section discusses the implementation of our ATOS attacks using this algorithm and provides a comparison with other frameworks.
\begin{algorithm}[t!]
	\caption{ATOS}\label{alg:main}
	
	\SetKwInput{KwObjective}{Objective}
	\SetKwInput{KwInput}{Input}
	\SetKwInput{KwInit}{Initialization}
	
	\KwObjective{Element-wise, Pixel-wise or Group-wise Adversarial Perturbation, $\boldsymbol{\Delta}$.}
	
	\KwInput{
		Main input $\boldsymbol{X}$, Adversarial target $\boldsymbol{y}_t$, 
		Number of steps $N_s$ and iterations $N_i$ per step, 
		Total gradient step size $\mu$, 
		Relaxations ($\lambda$, $\lambda_s$, $\lambda_\infty$) and scale $s_\lambda$ for $\lambda$, 
		Hyperparameters ($\sigma_0$, $s_\sigma$, $p$), 
		Rule of categorizing $\mathcal{R}$ to design element-wise, pixel-wise or group-wise perturbation.
	}
	
	\KwInit{$\boldsymbol{\Delta}_0 = 0$, $k = 0$.}
	
	\While{$k \le N_s-1$}{\label{while}
		\For{$i=0, \dots, N_i-1$}{
			Define $\boldsymbol{y}_{\sigma_i}$ by using $\boldsymbol{\Delta}_i + \boldsymbol{X}$ as input to the model. 
			
			$
			\boldsymbol{\Delta}_{i+1} = 
			\boldsymbol{\Delta}_i - \mu \cdot 
			\nabla_{\boldsymbol{\Delta}_i} 
			\Bigg\langle 
			\lambda \cdot \text{CE}(\boldsymbol{y}_{\sigma_i}, \boldsymbol{y}_t) + 
			\lambda_s \sigma_k^2 \cdot r_s(\mathcal{C}(\boldsymbol{\Delta}_i; \mathcal{R})) \Big|_{\sigma = \sigma_k} + 
			\lambda_\infty \cdot r_\infty(\mathcal{V}(\boldsymbol{\Delta}_i)) \Big|_P
			\Bigg\rangle
			$ \label{grad-step}
			
			Clip $\boldsymbol{\Delta}_{i+1}$ so that $\boldsymbol{\Delta}_{i+1} + \boldsymbol{X} \in [0,1]$. 
		}
		$\boldsymbol{\Delta}_0 = \boldsymbol{\Delta}_{N_i}$; \\
		$\sigma_{k+1} = \sigma_k \times s_\sigma$; \\
		$k = k+1$;
	}
	
	Quantize $\boldsymbol{\Delta} = \boldsymbol{\Delta}_{N_i}$ into the real image range $[0,1]$\label{quantization-step}. 
	
	Define $\boldsymbol{y}_\sigma$ by using $\boldsymbol{\Delta} + \boldsymbol{X}$ as input to the model\label{modifying-step}. 
	
	\uIf{$\boldsymbol{y}_\sigma \neq \boldsymbol{y}_t$}{
		$\boldsymbol{\Delta}_0 = 0$, $k = 0$, $\lambda = \lambda \times s_\lambda$; \\
		Go to Step \ref{while};
	}
\end{algorithm}

%%%%%%%%%%%%%%%%%%%%%%%%%%%%%%%%%%%%%%%%%%%%%%
%%%%%%%%%%%%%%%%%%%%%%%%%%%%%%%%%

\section{Simulation Results} \label{sr}	
Thus far, our element-wise, group-wise, and pixel-wise adversarial attacks have been proposed by ATOS using accurate approximations of sparsity patterns and the gradient of the $\ell_\infty$ norm. 

\subsection{Experimental Setup}
In this section, we implement our ATOS attacks based on Algorithm \ref{alg:main} and evaluate them comparatively. Specifically, GSA-SCAD \cite{amini2024fast} and PGD-$\ell_0$ \cite{saich19} are implemented for element-wise and pixel-wise modes, as they represent a new attack followed by ATOS and one of the strongest available pixel-wise attacks, respectively. In the group-wise mode, we compare our ATOS framework with other frameworks, including StrAttack \cite{xu2018structured}, Homotopy \cite{zhu2021sparse}, FWnucl \cite{kazemi2023minimally}, and GSE \cite{ICLR2025_398b00a0}.

We use five different models for the CIFAR-10 \cite{cifar10} and ImageNet \cite{deng2009imagenet, russakovsky2015imagenet} datasets. The first one, referred to as CNN, consists of simple yet effective CNN blocks and achieves an accuracy of $79.51\%$ on CIFAR-10 \cite{saafw20}. For ImageNet, we use ViT-s16 \cite{touvron2022three,dosovitskiy2020image}, an attention-based architecture with $79.7\%$ top-1 accuracy, and ResNet152 \cite{wightman2021ResNet}, which achieves $82.74\%$ top-1 accuracy. The final two models are the MeanSparse ConvNeXt-L \cite{amini2024meansparse} and the Swin-L Transformer \cite{liu2024comprehensive}, which are highly ranked robust models on RobustBench \cite{NEURIPS_DATASETS_BENCHMARKS2021_a3c65c29}, achieving AutoAttack accuracies of $58.22\%$ and $59.56\%$, respectively. In addition, we use all correctly classified samples from the test set of CIFAR-10. For ImageNet, we randomly select 1,000 correctly classified samples from the validation set. 

The implementation of our ATOS attacks is based on Algorithm \ref{alg:main} and the code is available at: \url{https://github.com/alireza-heshmati/ATOS}.

ATOS attacks can be either targeted or untargeted. Targeted attacks can be further divided into three main categories: best, worst, and average \cite{carlini2017towards}. The best case aims to deceive the classifier into assigning the non-maximal class with the highest confidence, while the worst case seeks to assign the class with the lowest confidence. The average case involves fooling the classifier into assigning a random non-maximal class. In this paper, we focus on the worst case, which represents the most challenging scenario, and also consider the untargeted case as the least challenging.

Regarding Algorithm \ref{alg:main}, $N_i$ is set to $200$, and $p$ is set to $10^4$ to control the intensity of perturbations. 
Based on \cite{mohimani2008fast}, we select 0.5 for $s_\sigma$ to enforce strong sparsity for all attacks, except those that use the LSEAp function on the ImageNet dataset to reduce the intensity of the perturbation. This is because the image size is larger than that of the CIFAR-10 dataset, which helps the LSEAp function be applied more effectively.
$\mathcal{R}$ for our group-wise attacks is based on a sliding window with $(n = 4, s = 1)$ for the CIFAR-10 dataset and $(n = 16, s = 2)$ for the ImageNet dataset. 
This selection of $n$ and $s$ is based on maximizing neighborhood coverage while maintaining suitable computational complexity, which is of the order $\mathcal{O}(n^2 / s^2)$, as discussed in the previous section.
To determine $ \sigma_0 $ based on Table~\ref{tab:osl0-init}, for attacks that do not control the perturbation intensity and whose aim is solely to induce sparse modes, the maximum absolute perturbation (denoted $ \delta_m $ in the table) is set to 1 since it is the maximum allowable change for each element of the images normalized between 0 and 1. In cases where it is necessary to curb the perturbation intensity, however, $ \delta_m $ must be loosely approximated, as demonstrated in each experiment alongside the other parameters.

Although relaxation parameters ($\lambda$, $\lambda_s$, and $\lambda_\infty$) and the gradient step ($\mu$) are selected through a search over a hyper-parameter grid, it is better to first select $\mu$ and $\lambda$ in a way that results in adversarial perturbation with the lowest $\ell_\infty$ norm. Then, the other parameters should be determined, and even the previous ones should be re-selected if necessary. To tune the hyperparameters in Algorithm \ref{alg:main}, we provide the step \ref{modifying-step}, in which the constant of the objective function is multiplied by $s_\lambda$.
The other frameworks are implemented with a careful search over a hyper-parameter grid to find the best values. All algorithms in the paper start with a batch of images, which terminate if all images in the batch are successfully attacked. Otherwise, for the unsuccessful ones,  hyper-parameters of frameworks become relaxed to ultimately attack them.

To accurately evaluate the group-wise methods, we use two additional metrics: the interpretability score (IS) and $\ell_{2,0}$. The IS \cite{xu2018structured} is a metric that measures the overlap between the adversarial perturbation ($\Delta$) and the adversarial saliency map (ASM), which offers a sensitivity analysis of the impact of pixel-level perturbations on classification \cite{papernot2016limitations}. Accordingly, 
$IS(\Delta)$ is defined as $\|B_{ASM} \odot \Delta\|_2/\|\Delta\|_2$, where $B_{ASM}$ is a binary mask derived from the ASM, with its elements set to one if the corresponding elements of the ASM exceed a threshold $v$, and zero otherwise. Since the ASM can measure the effect of each pixel of the perturbation on the target, when $IS(\Delta) \to 1$, more perturbation locations match the regions of the image that are more important for inference. Thus, a higher IS indicates a more explainable attack.
The $\ell_{2,0}$ metric counts the number of groups in the adversarial perturbation, containing at least one non-zero element. These groups, which are fully overlapped, are achieved by moving a window with a size of $8\times8$, depth 3, and stride 1 for both the CIFAR-10 and ImageNet datasets.

\subsection{Results}
In this section, we investigate the results of ATOS for its element-wise, pixel-wise, and group-wise perturbations, referred to as ATOS-$\ell_0$, ATOS-PW, and ATOS-GW, respectively. For ATOS attacks that control $\ell_\infty$ norm, the suffix "-$\ell_\infty$" is appended.

\begin{table}
    \centering
	\caption{Comparison of the proposed element-wise and pixel-wise attacks (ATOS-$\ell_0$ and ATOS-PW, respectively) to related methods. Common parameters are: $N_s: 10; s_\lambda: 2$.}
	\label{tab:sparse}
	\resizebox{0.65\textwidth}{!}{%
		\begin{tabular}{ccccccccccc}
			\hline % 11
			Network & \multirow{2}{*}{Methods} & 
			\multicolumn{4}{c}{Untargeted Case}& \multicolumn{5}{c}{Worst Case} \\ \cline{3-6} 
			\cline{8-11} 
			(Dataset) &  & ASR & NPP  & $\ell_0$  & $T$  && ASR  & NPP  & $\ell_0$  & $T$  \\ \hline
			\multirow{3}{*}{CNN}
			% & ASR & NPP  & $\ell_0$  & $\ell_2$  & $T$ 
			& PGD-$l_0$
			%untargeted
			& $\boldsymbol{100.0}$ & $7$  & $21$  & $\boldsymbol{0.024}$ & 
			%Worst
			& $\boldsymbol{100.0}$ & $22$  & $65$  & $0.033$   \\   
			\multirow{3}{*}{(CIFAR-10)}
			& GSA-SCAD
			%untargeted
			& $\boldsymbol{100.0}$ & $15$  & $21$  & $0.034$ &  
			%Worst
			& $\boldsymbol{100.0}$ & $38$  & $51$  &  $0.034$   \\ 
			
			& ATOS-$\ell_0$    
			%untargeted
			& $\boldsymbol{100.0}$ & $10$  & $\boldsymbol{14}$   & $0.033$  & 
			%Worst
			& $\boldsymbol{100.0}$ & $25$  & $\boldsymbol{35}$   & $\boldsymbol{0.032}$   \\ 
			
			& ATOS-PW  
			%untargeted
			& $\boldsymbol{100.0}$ & $\boldsymbol{6}$  & $18$ % $19$ 
			& $0.032$  & 
			%Worst
			& $\boldsymbol{100.0}$ & $\boldsymbol{16}$  & $47$ & $\boldsymbol{0.032}$   \\ 
			\cline{2-11}
			
			%$N_s:  ,N_i: ,\mu: , c: , s_\lambda: , \lambda_s: ,\lambda_\infty: , p: , \sigma_m: , s_\sigma: $
			\multicolumn{11}{c}{\footnotesize ATOS-$\ell_0$($\mu: 1.2; c: 0.0125;   \lambda_s: 1$), ATOS-PW($\mu: 1.4; c: 0.0125;   \lambda_s: 1$)}\\
			
			\hline
			\multirow{3}{*}{ResNet152}
			
			& PGD-$l_0$
			%untargeted
			& $\boldsymbol{100.0}$ & $393$  & $1179$  & $\boldsymbol{3.16}$ & 
			%Worst
			& $\boldsymbol{100.0}$ & $2496$  & $7478$  & $5.60$   \\   
			\multirow{3}{*}{(ImageNet)}
			& GSA-SCAD
			%untargeted
			& $\boldsymbol{100.0}$ & $503$  & $548$  & $17.6$ &  
			%Worst
			& $99.9$ & $1338$  & $1532$  & $17.7$   \\ 
			
			& ATOS-$\ell_0$    
			%untargeted
			& $\boldsymbol{100.0}$ & $131$  & $\boldsymbol{141}$  & $5.54$  & 
			%Worst
			& $\boldsymbol{100.0}$ & $596$  & $\boldsymbol{691}$  & $\boldsymbol{5.55}$   \\ 
			
			& ATOS-PW  
			%untargeted
			& $\boldsymbol{100.0}$ & $\boldsymbol{98}$  & $256$  & $5.53$  & 
			%Worst
			& $\boldsymbol{100.0}$ & $\boldsymbol{396}$  & $1160$  &  $5.60$    \\ 
			\cline{2-11}
			
			\multicolumn{11}{c}{\footnotesize ATOS-$\ell_0$($\mu: 1.2; c: 0.025;   \lambda_s: 0.8,0.6$), ATOS-PW($\mu: 1.2; c: 0.025;   \lambda_s: 0.8,0.6$)}\\
			\hline
			\multirow{3}{*}{ViT-s16}
			
			& PGD-$l_0$
			%untargeted
			& $\boldsymbol{100.0}$ & $170$  & $509$  & $\boldsymbol{1.25}$ & 
			%Worst
			& $\boldsymbol{100.0}$ & $2966$  & $8883$  & $3.34$   \\   
			\multirow{3}{*}{(ImageNet)}
			& GSA-SCAD
			%untargeted
			& $99.9$ & $\boldsymbol{142}$  & $\boldsymbol{165}$  & $5.23$ & 
			%Worst
			& $\boldsymbol{100.0}$ & $ 520$  & $\boldsymbol{631}$  & $4.37$   \\ 
			
			& ATOS-$\ell_0$    
			%untargeted
			& $\boldsymbol{100.0}$ & $283$  & $371$  & $2.48$  & 
			%Worst
			& $\boldsymbol{100.0}$ & $705$  & $899$  & $\boldsymbol{2.44}$   \\ 
			
			& ATOS-PW  
			%untargeted
			& $\boldsymbol{100.0}$ & $211$  & $605$  & $2.53$  & 
			%Worst
			& $\boldsymbol{100.0}$ & $\boldsymbol{501}$  & $1475$  & $2.45$   \\ 
			\cline{2-11}
			\multicolumn{11}{c}{\footnotesize ATOS-$\ell_0$($\mu: 1.2; c: 0.025;   \lambda_s: 0.2$), ATOS-PW($\mu: 1.2; c: 0.025;   \lambda_s: 0.2$)}\\
			\hline
		\end{tabular}
	}
\end{table}
\begin{table}
	\caption{Comparison of the proposed imperceptible pixel-wise attack (ATOS-PW-$\ell_\infty$)  to competing methods. Common parameters are: $\mu: 0.2.$}
	\label{tab:invisible}
	\resizebox{0.8\textwidth}{!}{
		\begin{tabular}{ccccccccccc}
			\hline % 11
			Network & \multirow{2}{*}{Methods} & 
			\multicolumn{4}{c}{Untargeted Case}& \multicolumn{5}{c}{Worst Case} \\ \cline{3-6} 
			\cline{8-11} 
			(Dataset)&  & ASR & NPP  & $\ell_\infty$   & $T$  && ASR  & NPP  & $\ell_\infty$   & $T$  \\ \hline
			\multirow{3}{*}{CNN}
			
			& PGD-$(l_0$,$l_\infty)$
			%untargeted
			& $95.3$ & $113$  & $0.032\pm0.005$  & $0.08$ & 
			%Worst
			& $\boldsymbol{100.0}$ & $175$  & $0.051\pm0.000$  & $0.03$   \\   
			
			\multirow{3}{*}{(CIFAR-10)}& GSA-SCAD
			%untargeted
			& $\boldsymbol{100.0}$ & $76$  & $0.031\pm0.000$  & $0.034$ &  
			%Worst
			& $\boldsymbol{100.0}$ & $133$  & $0.051\pm0.000$  & $0.033$   \\ 
			
			& ATOS-PW-$\ell_\infty$  
			%untargeted
			& $\boldsymbol{100.0}$ & $\boldsymbol{48}$  & $0.032\pm 0.016$  & $\boldsymbol{0.016}$  & 
			%Worst
			& $\boldsymbol{100.0}$ & $\boldsymbol{114}$  & $0.049\pm0.01$   & $\boldsymbol{0.016}$   \\ 
			\cline{2-11}
			\multicolumn{11}{c}{\footnotesize ATOS-PW-$\ell_\infty$($N_s: 5 ; c: 0.01,0.02; s_\lambda: 1.5; \lambda_s: 1.2,0.8;\lambda_\infty: 0.15,0.25; \sigma_m: 0.03,0.04$)}\\
			\hline
			\multirow{3}{*}{ResNet152}
			
			& PGD-$(l_0$,$l_\infty)$
			%untargeted
			& $\boldsymbol{100.0}$ & $1497$  & $0.02\pm0.001$  & $\boldsymbol{3.62}$ & 
			%Worst
			& $\boldsymbol{100.0}$ & $13049$  & $0.039\pm0.000$  &$\boldsymbol{6.04}$   \\   
			
			\multirow{3}{*}{(ImageNet)} & GSA-SCAD
			%untargeted
			& $\boldsymbol{100.0}$ & $851$  & $0.02\pm0.000$   & $13.0$ &  
			%Worst
			& $99.9$ & $\boldsymbol{1709}$  & $0.039\pm0.000$  & $11.9$   \\ 
			
			& ATOS-PW-$\ell_\infty$  
			%untargeted
			& $\boldsymbol{100.0}$ & $\boldsymbol{739}$  & $0.019\pm0.009$   & $29.0$  & 
			%Worst
			& $\boldsymbol{100.0}$ & $1896$  & $0.042\pm0.013$   & $37.5$   \\ 
			\cline{2-11}
			
			\multicolumn{11}{c}{ \footnotesize ATOS-PW-$\ell_\infty$($N_s: 10 ; c: 0.01; s_\lambda: 2; \lambda_s: 0.9,1;\lambda_\infty: 0.2; \sigma_m: 0.02,0.04$)}\\
			\hline
			
			\multirow{3}{*}{ViT-s16}
			
			& PGD-$(l_0$,$l_\infty)$
			%untargeted
			& $\boldsymbol{100.0}$ & $1368$  & $0.02\pm0.000$  & $\boldsymbol{2.66}$ & 
			%Worst
			& $\boldsymbol{100.0}$ & $16868$  & $0.039\pm0.001$  & $6.33$   \\   
			
			\multirow{3}{*}{(ImageNet)} & GSA-SCAD
			%untargeted
			& $99.9$ & $594$  & $0.02\pm0.005$  & $6.41$ &  
			%Worst
			& $\boldsymbol{100.0}$ & $1501$  & $0.039\pm0.006$  & $\boldsymbol{4.09}$   \\ 
			
			& ATOS-PW-$\ell_\infty$  
			%untargeted
			& $\boldsymbol{100.0}$ & $\boldsymbol{459}$  & $0.021\pm0.009$  & $8.21$  & 
			%Worst
			& $\boldsymbol{100.0}$ & $\boldsymbol{1373}$  & $0.04\pm0.011$  & $6.05$   \\ 
			\cline{2-11}
			\multicolumn{11}{c}{ \footnotesize ATOS-PW-$\ell_\infty$($N_s: 10 ; c: 0.002,0.005; s_\lambda: 1.5,2; \lambda_s: 0.9,0.6;\lambda_\infty: 0.2,0.3; \sigma_m: 0.02,0.04$)}\\
			\hline
		\end{tabular}
	}
\end{table}

\subsubsection{Pixel-Wise Attack}
Table \ref{tab:sparse} presents a comparison of the proposed pixel-wise attack with other methods. It shows that our pixel-wise adversarial examples consistently have the lowest number of perturbed pixels (NPP), except for cases where the Attack Success Rate (ASR) for GSA-SCAD does not reach $100\%$ for the worst case on ResNet152. In addition, there are clear differences between element-wise attacks (ATOS-$\ell_0$ and GSA-SCAD) and pixel-wise attacks (ATOS-PW and PGD-$\ell_0$) in controlling $\ell_0$ and NPP. Element-wise attacks demonstrate better control over the number of perturbed elements ($\ell_0$), while pixel-wise attacks aim to minimize the number of perturbed pixels (NPP). However, a real attack should control NPP due to the two-dimensional nature of images in reality. Therefore, in the following, we report only the results of our pixel-wise and group-wise attacks.

Table \ref{tab:invisible} demonstrates the results for the case where both sparsity and invisibility are regularized. According to this table, ATOS-PW again achieves the best result in the NPP criterion, with a significant gap between our method and the previous ones. Furthermore, while the mean of the $\ell_\infty$ norm for all attacks is very similar, the standard deviation of the $\ell_\infty$ norm for our attack is higher than that of the others, because we use a loss function (LSEAp) to control this norm, unlike other methods that only use a threshold. However, to ensure fairness in the invisibility of our attack, we employ a threshold ($2\times \sigma_m$) to control the extent of perturbation in each iteration of Algorithm \ref{alg:main}.

\begin{table}
	\caption{Comparison of the proposed group-wise attack (ATOS-GW)  to other methods. Common parameters are: $N_s: 10 ;\mu: 1 ; s_\lambda: 2$.}
	\label{tab:group}
	\resizebox{0.65\textwidth}{!}{
		\begin{tabular}{ccccccccccc}
			\hline % 11
			
			Network & \multirow{2}{*}{Methods} & 
			\multicolumn{4}{c}{Untargeted Case}& \multicolumn{5}{c}{Worst Case} \\
			\cline{3-6} 
			\cline{8-11} 
			(Dataset) &  & ASR & NPP  &  $\ell_{2,0}$ &  $T$  && ASR  & NPP  &  $\ell_{2,0}$ & $T$  \\ \hline
			\multirow{3}{*}{CNN}
			& FWnucl
			%untargeted
			& $78.9$ & $139$  &  $404$ &       $\boldsymbol{0.052}$ &
			%Worst
			& $14.4$ & $182$  &   $462$ &        $\boldsymbol{0.053}$   \\   
			
			\multirow{3}{*}{(CIFAR-10)} & GSE
			%untargeted
			& $98.6$ & $29$  &  $150$ &      $1.424$ &
			%Worst
			& $98.7$ & $75$  &  $267$ &        $1.756$   \\   
			
			& ATOS-GW  
			%untargeted
			& $\boldsymbol{100.0}$ & $\boldsymbol{15}$   & 
			$\boldsymbol{146}$ &   $0.292$&
			%Worst
			& $\boldsymbol{100.0}$ & $\boldsymbol{52}$  &  
			$\boldsymbol{259}$ &   $0.298$\\ 
			\cline{2-11}
			\multicolumn{11}{c}{ \footnotesize ATOS-GW($ c: 0.0125,0.005; s_\lambda: 2; \lambda_s: 0.05,0.02$ )}\\
			\hline
			\multirow{3}{*}{ResNet152}
			
			& FWnucl
			%untargeted
			& $65.5$ & $17663$  & $19647$ &        $6.910$ &
			%Worst
			& $45.3$ & $15944$  &   $19215$ &  $6.914$   \\   
			
			\multirow{3}{*}{(ImageNet)} & GSE
			%untargeted
			& $98.4$ & $\boldsymbol{558}$  &   $1392$&        $\boldsymbol{5.137}$ &
			%Worst
			& $99.6$ & $2953$  &   $5830$ &        $\boldsymbol{5.933}$    \\

			& ATOS-GW
			%untargeted
			& $\boldsymbol{100.0}$ & $666$  &  
			$\boldsymbol{998}$ &  $6.144$&
			%Worst
			& $\boldsymbol{100.0}$ & $\boldsymbol{1957}$  & 
			$\boldsymbol{2261}$ & $6.191$\\ 
			\cline{2-11}
			\multicolumn{11}{c}{ \footnotesize ATOS-GW($c: 0.02;  \lambda_s: 0.02,0.024$ )}\\
			\hline
			\multirow{3}{*}{ViT-s16}
			& FWnucl
			%untargeted
			& $86.1$ & $16734$  &   $18956$ & $3.582$ &
			%Worst
			& $22.5$ & $16197$  &  $17586$ & $3.135$   \\   
			
			\multirow{3}{*}{(ImageNet)} & GSE
			%untargeted
			& $\boldsymbol{100.0}$ & $\boldsymbol{374}$  &  $1107$ &        $\boldsymbol{1.424}$ &
			%Worst
			& $99.7$ & $\boldsymbol{2325}$  &$5113$ &        $\boldsymbol{2.601}$   \\   
			
			& ATOS-GW
			%untargeted
			& $\boldsymbol{100.0}$ & $860$  &  
			$\boldsymbol{860}$ &   $3.264$&
			%Worst
			& $\boldsymbol{100.0}$ & $2644$  & 
			$\boldsymbol{2644}$ &$3.318$\\

			\cline{2-11}
			
			\multicolumn{11}{c}{ \footnotesize ATOS-GW($ c: 0.02;  \lambda_s: 0.02$ )}\\
			% N_s: 10 ;N_i: 200 ;\mu: 1 ; s_\lambda: 2; s_\sigma: 0.5
			\hline
		\end{tabular}
	}
\end{table}
\begin{table}[htbp]
	\centering
	\caption{Comparison of the proposed imperceptible group-wise attack (ATOS-GW-$\ell_\infty$) to other methods. Common parameters are: $\mu: 0.2; s_\lambda: 1.5$.}
	\label{tab:group_imp}
	\resizebox{0.8\textwidth}{!}{
		\begin{tabular}{ccccccccccccc}
			\hline % 13
			Network & \multirow{2}{*}{Methods} & 
			\multicolumn{3}{c}{Untargeted Case}& \multicolumn{8}{c}{Worst Case} \\
			\cline{3-7} 
			\cline{9-13} 
			(Dataset)&  & ASR & NPP  & $\ell_\infty$  & $\ell_{2,0}$  & $T$  && ASR  & NPP  & $\ell_\infty$  &  $\ell_{2,0}$   & $T$  \\ \hline
			\multirow{3}{*}{CNN}
			
			& StrAttack
			%untargeted
			& $\boldsymbol{100.0}$ & $170$  & $0.053$  &  $477$ &      $0.163$ &
			%Worst
			& $\boldsymbol{100.0}$ & $301$  & $0.096$  & $602$ &      $0.166$   \\   
			
			\multirow{3}{*}{(CIFAR-10)}& Homotopy
			%untargeted
			& $63.0$ & $309$  & $\boldsymbol{0.029}$  &  $374$ & $198.906$ &
			%Worst
			& $51.0$ & $\boldsymbol{160}$  & $0.069$  & $483$ &      $581.288$   \\
			& ATOS-GW-$\ell_\infty$  
			%untargeted
			& $\boldsymbol{100.0}$ & $\boldsymbol{101}$  & $0.033$  &  $\boldsymbol{329}$  &
			$\boldsymbol{0.131}$&
			%Worst
			& $\boldsymbol{100.0}$ & $172$  & $\boldsymbol{0.069}$   & $\boldsymbol{470}$ &  $\boldsymbol{0.132}$   \\ 
			\cline{2-13}
			
			\multicolumn{13}{c}{\footnotesize ATOS-GW-$\ell_\infty$($N_s: 5 ; c: 0.01; \lambda_s: 0.15,0.1;\lambda_\infty: 0.2; \sigma_m: 0.03,0.05$)}\\
			\hline
			\multirow{2}{*}{ResNet152}
			
			& StrAttack
			%untargeted
			& $\boldsymbol{100.0}$ & $9324$  & $0.037$  &  $9324$ &  $11.582$ &
			%Worst
			& $97.9$ & $22894$  & $0.076$   & $27486$ &    $\boldsymbol{11.614}$   \\   
			\multirow{2}{*}{(ImageNet)} & ATOS-GW-$\ell_\infty$   
			%untargeted
			& $\boldsymbol{100.0}$ & $\boldsymbol{1342}$  & $\boldsymbol{0.026}$  & 
			$\boldsymbol{1753}$ &  $\boldsymbol{6.450}$&
			%Worst
			& $\boldsymbol{100.0}$ & $\boldsymbol{3455}$  & $\boldsymbol{0.05}$   &
			$\boldsymbol{3455}$ &  $12.361$\\  
			\cline{2-13}
			
			\multicolumn{13}{c}{ \footnotesize ATOS-GW-$\ell_\infty$($N_s: 10 ; c: 0.015; \lambda_s: 0.15,0.1;\lambda_\infty: 0.2; \sigma_m: 0.02,0.04$)}\\
			\hline
			\multirow{2}{*}{ViT-s16}
			& StrAttack
			%untargeted
			& $\boldsymbol{100.0}$ & $6111$  & $0.042$  &  $9581$ &        $\boldsymbol{4.919}$ &
			%Worst
			& $\boldsymbol{100.0}$ & $17672$  & $0.109$  & $23967$ &         $\boldsymbol{4.861}$   \\   			
			\multirow{2}{*}{(ImageNet)}& ATOS-GW-$\ell_\infty$   
			%untargeted
			& $\boldsymbol{100.0}$ & $\boldsymbol{1249}$  & $\boldsymbol{0.029}$    & 
			$\boldsymbol{1715}$ & $5.740$&
			%Worst
			& $\boldsymbol{100.0}$ & $\boldsymbol{3829}$  & $\boldsymbol{0.066}$  &  
			$\boldsymbol{4835}$ &  $10.144$\\ 
			\cline{2-13}
			
			\multicolumn{13}{c}{ \footnotesize ATOS-GW-$\ell_\infty$($N_s: 10 ;c: 0.015,0.01; \lambda_s: 0.15,0.1;\lambda_\infty: 0.2,0.3;  \sigma_m: 0.02,0.04$)}\\
			% N_i: 200 ; \mu: 0.2; s_\lambda: 1.5; ; p: 10^4
			\hline
		\end{tabular}
	}
\end{table}

\begin{figure}[htbp]
	\includegraphics[width=0.5\textwidth]{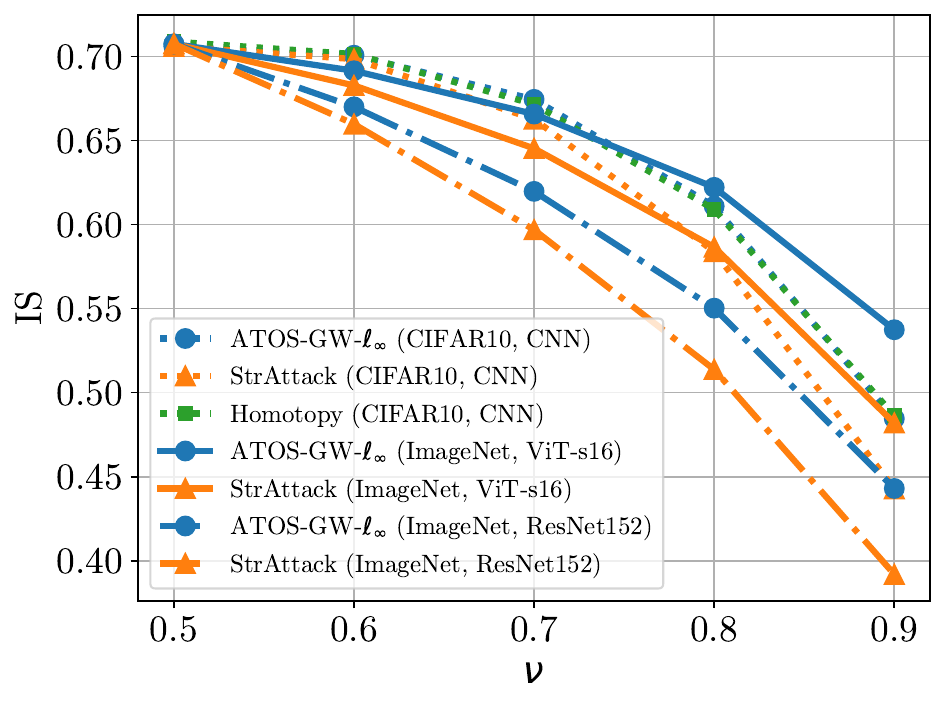}
	\centering
	\caption{IS comparison of imperceptible group-wise attacks for worst-case target versus threshold $v$ for binary ASM.}\label{fig:IS}
	\Description{A plot showing IS versus threshold for group-wise attacks in binary ASM.}
\end{figure}

\begin{figure*}[htbp]
	\includegraphics[width=0.9\textwidth]{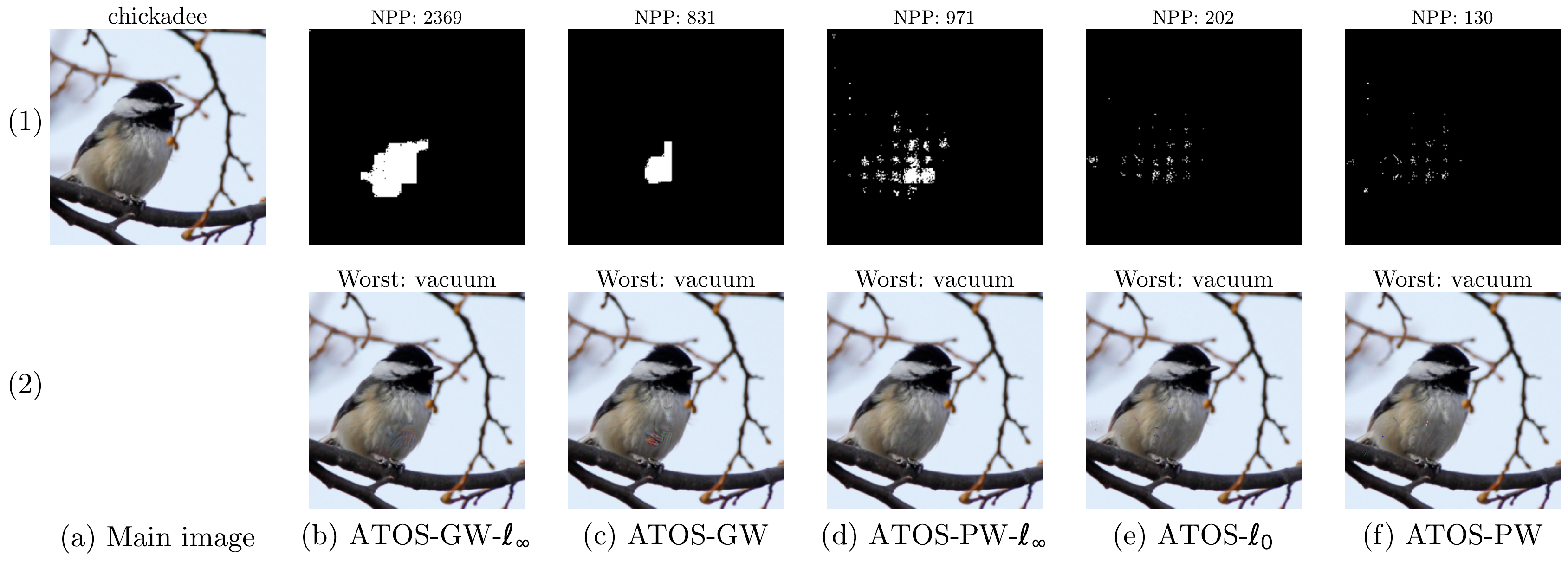}
	\centering
	\caption{Comparison of our different adversarial attacks on an ImageNet sample, targeting the ViT-S16 model. Column (a) shows an ImageNet sample labeled as "chickadee". Columns (b) to (f) present the results of our respective attacks. The binary mask in each column (row (1)) indicates the positions of perturbations with the number of perturbed pixels on its top, while the bottom part (row (2)) shows the adversarial example to misclassify the image as the worst-case label, "vacuum".}\label{fig:pos}
	\Description{A visual comparison of adversarial attacks applied to an ImageNet image of a chickadee. The figure shows the original image followed by several perturbed versions produced by different attack methods. Each column displays a binary mask of modified pixels and the resulting adversarial example, which misclassifies the image as a vacuum.}
	
\end{figure*}

\subsubsection{Group-Wise Attack}
Table {\ref{tab:group}} presents the results for the group-wise mode, which primarily aims to localize perturbations to specific regions of the image, thereby reducing their overall dispersion. The results indicate that only our ATOS attack achieves a 100\% ASR, except for the ViT-s16 model, where GSE achieves the same result. Consequently, our group-wise attack demonstrates superior performance in terms of NPP in most cases. Another important observation is that our group-wise attack results in the lowest number of perturbed blocks ($\ell_{2,0}$). The gap is considerable, particularly on the ImageNet dataset, indicating that our attack is less widespread and focuses on fewer blocks.

Table \ref{tab:group_imp} indicates the results for the group-wise attack with an invisibility constraint. Based on the mean of $\ell_\infty$ in the table for our attacks, which is below $0.03$ for untargeted cases and below $0.07$ for the worst cases, almost all perturbations from our attacks are imperceptible. Moreover, the table demonstrates that our attack achieves the best performance across all criteria, with an ASR of $100\%$.

Figure \ref{fig:IS} shows that among the imperceptible group-wise attacks, our ATOS attacks consistently exhibit better interpretability based on the IS metric across all datasets and models we examined. This indicates that ATOS-GW attacks target the image locations that are more critical for the model's decision-making for specific targets, particularly in comparison with the StrAttack. Although the IS of the Homotopy method is similar to ours, its low ASR and time-consuming nature made it impractical for us to experiment with this method on ImageNet (Table \ref{tab:group_imp}), rendering it inferior to our method. In terms of time complexity, all of our attacks demonstrate the best or near-best performance, consistently achieving an ASR of $100\%$, as shown in Tables~\ref{tab:sparse}, \ref{tab:invisible}, \ref{tab:group} and \ref{tab:group_imp}.

\begin{figure*}[htbp]
	\includegraphics[width=\textwidth]{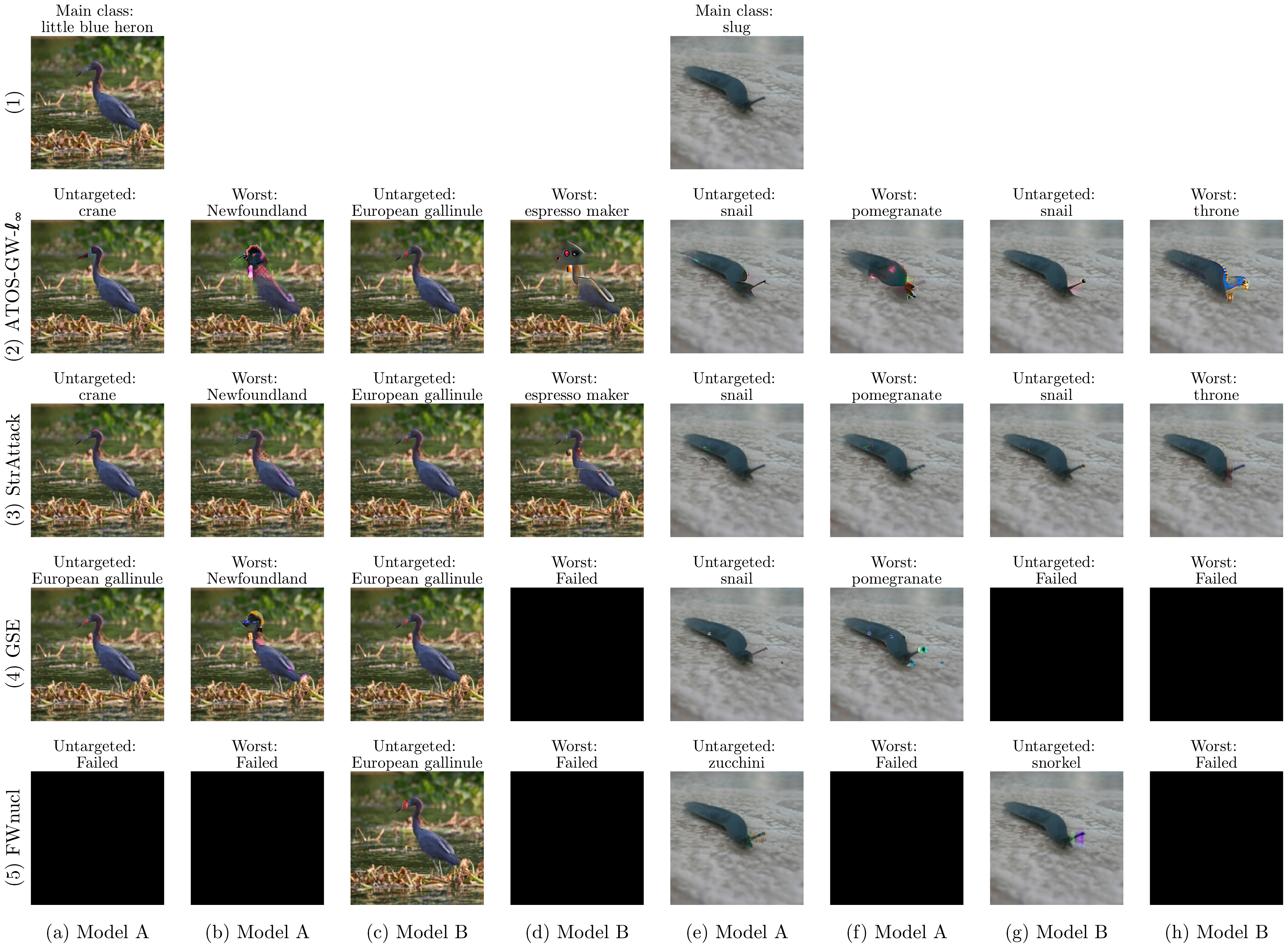}
	\centering
	\caption{Counterfactual explanations generated by ATOS-GW-$\ell_\infty$, StrAttack, GSE, and FWnucl (Rows (2) to (5), respectively) for two random samples from the ImageNet dataset in the robust networks of \cite{liu2024comprehensive} (Model A) and \cite{amini2024meansparse} (Model B). Row (1) contains the original images with their corresponding labels. Columns (a), (c), (e), and (g) correspond to counterfactual explanations for untargeted cases, while the remaining columns represent worst-case scenarios with their respective targeted labels.}
	\label{fig:pos2}
	\Description{A grid of counterfactual explanation images comparing multiple adversarial methods on two ImageNet samples across two robust models. The first row shows original labeled images, while subsequent rows show visual modifications produced by ATOS-GW-linf, StrAttack, GSE, and FWnucl. Alternating columns display untargeted and targeted counterfactual results for each method.}
\end{figure*}

Figure \ref{fig:pos} visualizes all versions of ATOS on the ViT-s16 model. It shows that even our non-group-wise perturbations follow the patch structure used by ViT models for learning. 
So far, since all the models we investigated lack robustness and primarily rely on non-robust features, the perturbations are partially located on the object, as shown in Figure \ref{fig:pos}. This indicates that the ATOS attacks are primarily concerned with how the model perceives the image, rather than being meaningful to the human eye. While this is advantageous for the purpose of attacks, it also demonstrates that these networks lack interpretability.

\subsubsection{Counterfactual Explainations}
Figure \ref{fig:pos2} demonstrates a case where group-wise attacks (except for Homotopy, which is not applicable) are generated on the robust networks of \cite{liu2024comprehensive} and \cite{amini2024meansparse}. 
According to the test results, none of the attacks successfully achieved invisibility, as expected. This is because we relaxed the constraints to maximize ASR. However, GSE and FWnucl do not always succeed in fooling the networks (as shown in Table \ref{tab:group}). In addition, the perturbations generated by FWnucl for these robust networks appear noise-like rather than semantically meaningful. 

StrAttack, like our ATOS attack, successfully misleads the models in all scenarios, but its perturbations are not always meaningful, as seen in the result for the image of a "slug". GSE generates more semantically relevant perturbations than StrAttack, but it sometimes fails to succeed. 

Ultimately, images produced by Algorithm \ref{alg:main} (ATOS-GW-$\ell_\infty$) consistently achieve a $100\%$ ASR while maintaining the semantics of the predicted class due to grouping. Regarding Figure \ref{fig:IS}, ATOS-GW-$\ell_\infty$ better aligns with the networks’ interpretations, generating more semantically meaningful perturbations than other methods (Figure \ref{fig:pos2}). Thus, we can interpret our attack on robust networks as a successful method for generating counterfactual explanations \cite{wachter2017counterfactual, goyal2019counterfactual, jeanneret2023adversarial}, unlike other attacks that do not always succeed. This approach can also be used to evaluate both model performance and interpretability.  

Based on the results of ATOS-GW-$\ell_\infty$ in Figure \ref{fig:pos2}, an interesting observation is that, in the untargeted case, the algorithm identifies the label that is semantically closest to the original image's label and generates a counterfactual explanation with that label.  
Furthermore, although the adversarial targets are the same in columns (e) and (g) ("snail"), the attack generates different results for each robust network. This occurs because the attack generates counterfactual explanations based on the perspective of each robust network.

%%%%%%%%%%%%%%%%%%%%%%%%%%%%%%%%%%%%%%%%%%%%%

\section{Conclusion} \label{c}
We have introduced ATOS algorithm for designing various modes of sparse structured adversarial attacks—namely element-wise, pixel-wise, and group-wise—based on the Overlapping Smoothed $\ell_0$ norm. Our algorithm guarantees convergence to a stationary point under mild conditions.
It also enables precise control over per-pixel perturbation, resulting in sparse and imperceptible adversarial modifications. The proposed methods achieve strong performance in reducing the number of perturbed pixels and groups and in increasing attack success rate, while maintaining efficient execution times. Furthermore, the group-wise mode serves as an effective tool for generating counterfactual explanations on robust networks.

\bibliographystyle{ACM-Reference-Format}
\bibliography{references}

%%% -*-BibTeX-*-
%%% Do NOT edit. File created by BibTeX with style
%%% ACM-Reference-Format-Journals [18-Jan-2012].

\begin{thebibliography}{54}

%%% ====================================================================
%%% NOTE TO THE USER: you can override these defaults by providing
%%% customized versions of any of these macros before the \bibliography
%%% command.  Each of them MUST provide its own final punctuation,
%%% except for \shownote{} and \showURL{}.  The latter two
%%% do not use final punctuation, in order to avoid confusing it with
%%% the Web address.
%%%
%%% To suppress output of a particular field, define its macro to expand
%%% to an empty string, or better, \unskip, like this:
%%%
%%% \newcommand{\showURL}[1]{\unskip}   % LaTeX syntax
%%%
%%% \def \showURL #1{\unskip}           % plain TeX syntax
%%%
%%% ====================================================================

\ifx \showCODEN    \undefined \def \showCODEN     #1{\unskip}     \fi
\ifx \showISBNx    \undefined \def \showISBNx     #1{\unskip}     \fi
\ifx \showISBNxiii \undefined \def \showISBNxiii  #1{\unskip}     \fi
\ifx \showISSN     \undefined \def \showISSN      #1{\unskip}     \fi
\ifx \showLCCN     \undefined \def \showLCCN      #1{\unskip}     \fi
\ifx \shownote     \undefined \def \shownote      #1{#1}          \fi
\ifx \showarticletitle \undefined \def \showarticletitle #1{#1}   \fi
\ifx \showURL      \undefined \def \showURL       {\relax}        \fi
% The following commands are used for tagged output and should be
% invisible to TeX
\providecommand\bibfield[2]{#2}
\providecommand\bibinfo[2]{#2}
\providecommand\natexlab[1]{#1}
\providecommand\showeprint[2][]{arXiv:#2}

\bibitem[Akhtar and Mian(2018)]%
        {akhtar2018threat}
\bibfield{author}{\bibinfo{person}{Naveed Akhtar} {and} \bibinfo{person}{Ajmal
  Mian}.} \bibinfo{year}{2018}\natexlab{}.
\newblock \showarticletitle{Threat of adversarial attacks on deep learning in
  computer vision: A survey}.
\newblock \bibinfo{journal}{\emph{Ieee Access}}  \bibinfo{volume}{6}
  (\bibinfo{year}{2018}), \bibinfo{pages}{14410--14430}.
\newblock


\bibitem[Amini et~al\mbox{.}(2024a)]%
        {amini2024fast}
\bibfield{author}{\bibinfo{person}{Sajjad Amini}, \bibinfo{person}{Alireza
  Heshmati}, {and} \bibinfo{person}{Shahrokh Ghaemmaghami}.}
  \bibinfo{year}{2024}\natexlab{a}.
\newblock \showarticletitle{Fast adversarial attacks to deep neural networks
  through gradual sparsification}.
\newblock \bibinfo{journal}{\emph{Engineering Applications of Artificial
  Intelligence}}  \bibinfo{volume}{127} (\bibinfo{year}{2024}),
  \bibinfo{pages}{107360}.
\newblock


\bibitem[Amini et~al\mbox{.}(2024b)]%
        {amini2024meansparse}
\bibfield{author}{\bibinfo{person}{Sajjad Amini}, \bibinfo{person}{Mohammadreza
  Teymoorianfard}, \bibinfo{person}{Shiqing Ma}, {and} \bibinfo{person}{Amir
  Houmansadr}.} \bibinfo{year}{2024}\natexlab{b}.
\newblock \showarticletitle{MeanSparse: Post-Training Robustness Enhancement
  Through Mean-Centered Feature Sparsification}.
\newblock \bibinfo{journal}{\emph{arXiv preprint arXiv:2406.05927}}
  (\bibinfo{year}{2024}).
\newblock


\bibitem[Awais et~al\mbox{.}(2025)]%
        {awais2025foundation}
\bibfield{author}{\bibinfo{person}{Muhammad Awais}, \bibinfo{person}{Muzammal
  Naseer}, \bibinfo{person}{Salman Khan}, \bibinfo{person}{Rao~Muhammad Anwer},
  \bibinfo{person}{Hisham Cholakkal}, \bibinfo{person}{Mubarak Shah},
  \bibinfo{person}{Ming-Hsuan Yang}, {and} \bibinfo{person}{Fahad~Shahbaz
  Khan}.} \bibinfo{year}{2025}\natexlab{}.
\newblock \showarticletitle{Foundation Models Defining a New Era in Vision: a
  Survey and Outlook}.
\newblock \bibinfo{journal}{\emph{IEEE Transactions on Pattern Analysis and
  Machine Intelligence}} (\bibinfo{year}{2025}).
\newblock


\bibitem[Buri{\'c} and Iva{\v{s}}i{\'c}-Kos(2025)]%
        {buric2025next}
\bibfield{author}{\bibinfo{person}{M Buri{\'c}} {and} \bibinfo{person}{M
  Iva{\v{s}}i{\'c}-Kos}.} \bibinfo{year}{2025}\natexlab{}.
\newblock \showarticletitle{Next-Generation Computer Vision in Veterinary
  Medicine: A Study on Canine Ophthalmology}.
\newblock \bibinfo{journal}{\emph{IEEE Transactions on Artificial
  Intelligence}} (\bibinfo{year}{2025}).
\newblock


\bibitem[Carlini and Wagner(2017)]%
        {carlini2017towards}
\bibfield{author}{\bibinfo{person}{Nicholas Carlini} {and}
  \bibinfo{person}{David Wagner}.} \bibinfo{year}{2017}\natexlab{}.
\newblock \showarticletitle{Towards evaluating the robustness of neural
  networks}. In \bibinfo{booktitle}{\emph{2017 ieee symposium on security and
  privacy (sp)}}. Ieee, \bibinfo{pages}{39--57}.
\newblock


\bibitem[Carrizosa et~al\mbox{.}(2024)]%
        {carrizosa2024mathematical}
\bibfield{author}{\bibinfo{person}{Emilio Carrizosa}, \bibinfo{person}{Jasone
  Ram{\'\i}rez-Ayerbe}, {and} \bibinfo{person}{Dolores~Romero Morales}.}
  \bibinfo{year}{2024}\natexlab{}.
\newblock \showarticletitle{Mathematical optimization modelling for group
  counterfactual explanations}.
\newblock \bibinfo{journal}{\emph{European Journal of Operational Research}}
  \bibinfo{volume}{319}, \bibinfo{number}{2} (\bibinfo{year}{2024}),
  \bibinfo{pages}{399--412}.
\newblock


\bibitem[Cheng et~al\mbox{.}(2024)]%
        {cheng2024color}
\bibfield{author}{\bibinfo{person}{Lingfeng Cheng}, \bibinfo{person}{Heying
  Zhang}, {and} \bibinfo{person}{Jiashu Zhang}.}
  \bibinfo{year}{2024}\natexlab{}.
\newblock \showarticletitle{Color image sparse adversarial samples generation
  via color channel Volterra expansion}.
\newblock \bibinfo{journal}{\emph{Signal, Image and Video Processing}}
  (\bibinfo{year}{2024}), \bibinfo{pages}{1--9}.
\newblock


\bibitem[Croce et~al\mbox{.}(2021)]%
        {NEURIPS_DATASETS_BENCHMARKS2021_a3c65c29}
\bibfield{author}{\bibinfo{person}{Francesco Croce}, \bibinfo{person}{Maksym
  Andriushchenko}, \bibinfo{person}{Vikash Sehwag}, \bibinfo{person}{Edoardo
  Debenedetti}, \bibinfo{person}{Nicolas Flammarion}, \bibinfo{person}{Mung
  Chiang}, \bibinfo{person}{Prateek Mittal}, {and} \bibinfo{person}{Matthias
  Hein}.} \bibinfo{year}{2021}\natexlab{}.
\newblock \showarticletitle{RobustBench: a standardized adversarial robustness
  benchmark}. In \bibinfo{booktitle}{\emph{Proceedings of the Neural
  Information Processing Systems Track on Datasets and Benchmarks}},
  \bibfield{editor}{\bibinfo{person}{J.~Vanschoren} {and}
  \bibinfo{person}{S.~Yeung}} (Eds.), Vol.~\bibinfo{volume}{1}.
\newblock
\urldef\tempurl%
\url{https://datasets-benchmarks-proceedings.neurips.cc/paper_files/paper/2021/file/a3c65c2974270fd093ee8a9bf8ae7d0b-Paper-round2.pdf}
\showURL{%
\tempurl}


\bibitem[Croce and Hein(2019)]%
        {saich19}
\bibfield{author}{\bibinfo{person}{Francesco Croce} {and}
  \bibinfo{person}{Matthias Hein}.} \bibinfo{year}{2019}\natexlab{}.
\newblock \showarticletitle{Sparse and imperceivable adversarial attacks}. In
  \bibinfo{booktitle}{\emph{Proceedings of the IEEE International Conference on
  Computer Vision}}. \bibinfo{pages}{4724--4732}.
\newblock


\bibitem[Croce and Hein(2020)]%
        {croce2020reliable}
\bibfield{author}{\bibinfo{person}{Francesco Croce} {and}
  \bibinfo{person}{Matthias Hein}.} \bibinfo{year}{2020}\natexlab{}.
\newblock \showarticletitle{Reliable evaluation of adversarial robustness with
  an ensemble of diverse parameter-free attacks}. In
  \bibinfo{booktitle}{\emph{International conference on machine learning}}.
  PMLR, \bibinfo{pages}{2206--2216}.
\newblock


\bibitem[Deng et~al\mbox{.}(2009)]%
        {deng2009imagenet}
\bibfield{author}{\bibinfo{person}{Jia Deng}, \bibinfo{person}{Wei Dong},
  \bibinfo{person}{Richard Socher}, \bibinfo{person}{Li-Jia Li},
  \bibinfo{person}{Kai Li}, {and} \bibinfo{person}{Li Fei-Fei}.}
  \bibinfo{year}{2009}\natexlab{}.
\newblock \showarticletitle{Imagenet: A large-scale hierarchical image
  database}. In \bibinfo{booktitle}{\emph{2009 IEEE conference on computer
  vision and pattern recognition}}. Ieee, \bibinfo{pages}{248--255}.
\newblock


\bibitem[Dong et~al\mbox{.}(2020)]%
        {dong2020greedyfool}
\bibfield{author}{\bibinfo{person}{Xiaoyi Dong}, \bibinfo{person}{Dongdong
  Chen}, \bibinfo{person}{Jianmin Bao}, \bibinfo{person}{Chuan Qin},
  \bibinfo{person}{Lu Yuan}, \bibinfo{person}{Weiming Zhang},
  \bibinfo{person}{Nenghai Yu}, {and} \bibinfo{person}{Dong Chen}.}
  \bibinfo{year}{2020}\natexlab{}.
\newblock \showarticletitle{Greedyfool: Distortion-aware sparse adversarial
  attack}.
\newblock \bibinfo{journal}{\emph{Advances in Neural Information Processing
  Systems}}  \bibinfo{volume}{33} (\bibinfo{year}{2020}),
  \bibinfo{pages}{11226--11236}.
\newblock


\bibitem[Dosovitskiy et~al\mbox{.}(2020)]%
        {dosovitskiy2020image}
\bibfield{author}{\bibinfo{person}{Alexey Dosovitskiy}, \bibinfo{person}{Lucas
  Beyer}, \bibinfo{person}{Alexander Kolesnikov}, \bibinfo{person}{Dirk
  Weissenborn}, \bibinfo{person}{Xiaohua Zhai}, \bibinfo{person}{Thomas
  Unterthiner}, \bibinfo{person}{Mostafa Dehghani}, \bibinfo{person}{Matthias
  Minderer}, \bibinfo{person}{Georg Heigold}, \bibinfo{person}{Sylvain Gelly},
  {et~al\mbox{.}}} \bibinfo{year}{2020}\natexlab{}.
\newblock \showarticletitle{An image is worth 16x16 words: Transformers for
  image recognition at scale}.
\newblock \bibinfo{journal}{\emph{arXiv preprint arXiv:2010.11929}}
  (\bibinfo{year}{2020}).
\newblock


\bibitem[Fan and Li(2001)]%
        {vsvfl01}
\bibfield{author}{\bibinfo{person}{Jianqing Fan} {and} \bibinfo{person}{Runze
  Li}.} \bibinfo{year}{2001}\natexlab{}.
\newblock \showarticletitle{Variable selection via nonconcave penalized
  likelihood and its oracle properties}.
\newblock \bibinfo{journal}{\emph{Journal of the American statistical
  Association}} \bibinfo{volume}{96}, \bibinfo{number}{456}
  (\bibinfo{year}{2001}), \bibinfo{pages}{1348--1360}.
\newblock


\bibitem[Fan et~al\mbox{.}(2020)]%
        {saafw20}
\bibfield{author}{\bibinfo{person}{Yanbo Fan}, \bibinfo{person}{Baoyuan Wu},
  \bibinfo{person}{Tuanhui Li}, \bibinfo{person}{Yong Zhang},
  \bibinfo{person}{Mingyang Li}, \bibinfo{person}{Zhifeng Li}, {and}
  \bibinfo{person}{Yujiu Yang}.} \bibinfo{year}{2020}\natexlab{}.
\newblock \showarticletitle{Sparse adversarial attack via perturbation
  factorization}. In \bibinfo{booktitle}{\emph{Proceedings of European
  Conference on Computer Vision}}.
\newblock


\bibitem[Flake(1963)]%
        {flake1963volterra}
\bibfield{author}{\bibinfo{person}{Robert~Henry Flake}.}
  \bibinfo{year}{1963}\natexlab{}.
\newblock \showarticletitle{Volterra series representation of nonlinear
  systems}.
\newblock \bibinfo{journal}{\emph{Transactions of the American Institute of
  Electrical Engineers, Part II: Applications and Industry}}
  \bibinfo{volume}{81}, \bibinfo{number}{6} (\bibinfo{year}{1963}),
  \bibinfo{pages}{330--335}.
\newblock


\bibitem[Goodfellow et~al\mbox{.}(2014)]%
        {goodfellow2014explaining}
\bibfield{author}{\bibinfo{person}{Ian~J Goodfellow}, \bibinfo{person}{Jonathon
  Shlens}, {and} \bibinfo{person}{Christian Szegedy}.}
  \bibinfo{year}{2014}\natexlab{}.
\newblock \showarticletitle{Explaining and harnessing adversarial examples}.
\newblock \bibinfo{journal}{\emph{arXiv preprint arXiv:1412.6572}}
  (\bibinfo{year}{2014}).
\newblock


\bibitem[Goyal et~al\mbox{.}(2019)]%
        {goyal2019counterfactual}
\bibfield{author}{\bibinfo{person}{Yash Goyal}, \bibinfo{person}{Ziyan Wu},
  \bibinfo{person}{Jan Ernst}, \bibinfo{person}{Dhruv Batra},
  \bibinfo{person}{Devi Parikh}, {and} \bibinfo{person}{Stefan Lee}.}
  \bibinfo{year}{2019}\natexlab{}.
\newblock \showarticletitle{Counterfactual visual explanations}. In
  \bibinfo{booktitle}{\emph{International Conference on Machine Learning}}.
  PMLR, \bibinfo{pages}{2376--2384}.
\newblock


\bibitem[Hamidi~Ghalehjegh et~al\mbox{.}(2010)]%
        {hamidi2010fast}
\bibfield{author}{\bibinfo{person}{Sina Hamidi~Ghalehjegh},
  \bibinfo{person}{Massoud Babaie-Zadeh}, {and} \bibinfo{person}{Christian
  Jutten}.} \bibinfo{year}{2010}\natexlab{}.
\newblock \showarticletitle{Fast block-sparse decomposition based on SL0}. In
  \bibinfo{booktitle}{\emph{International Conference on Latent Variable
  Analysis and Signal Separation}}. Springer, \bibinfo{pages}{426--433}.
\newblock


\bibitem[He et~al\mbox{.}(2025)]%
        {he2025diffusion}
\bibfield{author}{\bibinfo{person}{Chunming He}, \bibinfo{person}{Yuqi Shen},
  \bibinfo{person}{Chengyu Fang}, \bibinfo{person}{Fengyang Xiao},
  \bibinfo{person}{Longxiang Tang}, \bibinfo{person}{Yulun Zhang},
  \bibinfo{person}{Wangmeng Zuo}, \bibinfo{person}{Zhenhua Guo}, {and}
  \bibinfo{person}{Xiu Li}.} \bibinfo{year}{2025}\natexlab{}.
\newblock \showarticletitle{Diffusion models in low-level vision: A survey}.
\newblock \bibinfo{journal}{\emph{IEEE Transactions on Pattern Analysis and
  Machine Intelligence}} (\bibinfo{year}{2025}).
\newblock


\bibitem[Heshmati et~al\mbox{.}(2022)]%
        {heshmati2022designing}
\bibfield{author}{\bibinfo{person}{Alireza Heshmati}, \bibinfo{person}{Sajjad
  Amini}, \bibinfo{person}{Shahrokh Ghaemmaghami}, {and}
  \bibinfo{person}{Farokh Marvasti}.} \bibinfo{year}{2022}\natexlab{}.
\newblock \showarticletitle{Designing Low Coherent Measurement Matrix With
  Controlled Spectral Norm Via an Efficient Approximation of
  $\ell_\infty$-Norm}.
\newblock \bibinfo{journal}{\emph{IEEE Signal Processing Letters}}
  \bibinfo{volume}{29} (\bibinfo{year}{2022}), \bibinfo{pages}{2243--2247}.
\newblock


\bibitem[Ilyas et~al\mbox{.}(2019)]%
        {ilyas2019adversarial}
\bibfield{author}{\bibinfo{person}{Andrew Ilyas}, \bibinfo{person}{Shibani
  Santurkar}, \bibinfo{person}{Dimitris Tsipras}, \bibinfo{person}{Logan
  Engstrom}, \bibinfo{person}{Brandon Tran}, {and} \bibinfo{person}{Aleksander
  Madry}.} \bibinfo{year}{2019}\natexlab{}.
\newblock \showarticletitle{Adversarial examples are not bugs, they are
  features}.
\newblock \bibinfo{journal}{\emph{Advances in neural information processing
  systems}}  \bibinfo{volume}{32} (\bibinfo{year}{2019}).
\newblock


\bibitem[Jacob et~al\mbox{.}(2009)]%
        {jacob2009group}
\bibfield{author}{\bibinfo{person}{Laurent Jacob}, \bibinfo{person}{Guillaume
  Obozinski}, {and} \bibinfo{person}{Jean-Philippe Vert}.}
  \bibinfo{year}{2009}\natexlab{}.
\newblock \showarticletitle{Group lasso with overlap and graph lasso}. In
  \bibinfo{booktitle}{\emph{Proceedings of the 26th annual international
  conference on machine learning}}. \bibinfo{pages}{433--440}.
\newblock


\bibitem[Jeanneret et~al\mbox{.}(2023)]%
        {jeanneret2023adversarial}
\bibfield{author}{\bibinfo{person}{Guillaume Jeanneret},
  \bibinfo{person}{Lo{\"\i}c Simon}, {and} \bibinfo{person}{Fr{\'e}d{\'e}ric
  Jurie}.} \bibinfo{year}{2023}\natexlab{}.
\newblock \showarticletitle{Adversarial counterfactual visual explanations}. In
  \bibinfo{booktitle}{\emph{Proceedings of the IEEE/CVF Conference on Computer
  Vision and Pattern Recognition}}. \bibinfo{pages}{16425--16435}.
\newblock


\bibitem[Jenatton et~al\mbox{.}(2011)]%
        {jenatton2011structured}
\bibfield{author}{\bibinfo{person}{Rodolphe Jenatton},
  \bibinfo{person}{Jean-Yves Audibert}, {and} \bibinfo{person}{Francis Bach}.}
  \bibinfo{year}{2011}\natexlab{}.
\newblock \showarticletitle{Structured variable selection with
  sparsity-inducing norms}.
\newblock \bibinfo{journal}{\emph{The Journal of Machine Learning Research}}
  \bibinfo{volume}{12} (\bibinfo{year}{2011}), \bibinfo{pages}{2777--2824}.
\newblock


\bibitem[Kazemi et~al\mbox{.}(2023)]%
        {kazemi2023minimally}
\bibfield{author}{\bibinfo{person}{Ehsan Kazemi}, \bibinfo{person}{Thomas
  Kerdreux}, {and} \bibinfo{person}{Liqiang Wang}.}
  \bibinfo{year}{2023}\natexlab{}.
\newblock \showarticletitle{Minimally distorted structured adversarial
  attacks}.
\newblock \bibinfo{journal}{\emph{International Journal of Computer Vision}}
  \bibinfo{volume}{131}, \bibinfo{number}{1} (\bibinfo{year}{2023}),
  \bibinfo{pages}{160--176}.
\newblock


\bibitem[Krizhevsky et~al\mbox{.}(2009)]%
        {cifar10}
\bibfield{author}{\bibinfo{person}{Alex Krizhevsky}, \bibinfo{person}{Vinod
  Nair}, {and} \bibinfo{person}{Geoffrey Hinton}.}
  \bibinfo{year}{2009}\natexlab{}.
\newblock \bibinfo{title}{The CIFAR-10 Dataset}.
\newblock
  \bibinfo{howpublished}{\url{https://www.cs.toronto.edu/~kriz/cifar.html}}.
\newblock


\bibitem[Li et~al\mbox{.}(2024)]%
        {li2024adversarial}
\bibfield{author}{\bibinfo{person}{Ang Li}, \bibinfo{person}{Yifei Wang},
  \bibinfo{person}{Yiwen Guo}, {and} \bibinfo{person}{Yisen Wang}.}
  \bibinfo{year}{2024}\natexlab{}.
\newblock \showarticletitle{Adversarial examples are not real features}.
\newblock \bibinfo{journal}{\emph{Advances in Neural Information Processing
  Systems}}  \bibinfo{volume}{36} (\bibinfo{year}{2024}).
\newblock


\bibitem[Liu et~al\mbox{.}(2024)]%
        {liu2024comprehensive}
\bibfield{author}{\bibinfo{person}{Chang Liu}, \bibinfo{person}{Yinpeng Dong},
  \bibinfo{person}{Wenzhao Xiang}, \bibinfo{person}{Xiao Yang},
  \bibinfo{person}{Hang Su}, \bibinfo{person}{Jun Zhu},
  \bibinfo{person}{Yuefeng Chen}, \bibinfo{person}{Yuan He},
  \bibinfo{person}{Hui Xue}, {and} \bibinfo{person}{Shibao Zheng}.}
  \bibinfo{year}{2024}\natexlab{}.
\newblock \showarticletitle{A comprehensive study on robustness of image
  classification models: Benchmarking and rethinking}.
\newblock \bibinfo{journal}{\emph{International Journal of Computer Vision}}
  (\bibinfo{year}{2024}), \bibinfo{pages}{1--23}.
\newblock


\bibitem[Liu et~al\mbox{.}(2020)]%
        {aaola20}
\bibfield{author}{\bibinfo{person}{Jian Liu}, \bibinfo{person}{Naveed Akhtar},
  {and} \bibinfo{person}{Ajmal Mian}.} \bibinfo{year}{2020}\natexlab{}.
\newblock \showarticletitle{Adversarial Attack on Skeleton-Based Human Action
  Recognition}.
\newblock \bibinfo{journal}{\emph{IEEE Transactions on Neural Networks and
  Learning Systems}} (\bibinfo{year}{2020}).
\newblock


\bibitem[Long et~al\mbox{.}(2022)]%
        {long2022survey}
\bibfield{author}{\bibinfo{person}{Teng Long}, \bibinfo{person}{Qi Gao},
  \bibinfo{person}{Lili Xu}, {and} \bibinfo{person}{Zhangbing Zhou}.}
  \bibinfo{year}{2022}\natexlab{}.
\newblock \showarticletitle{A survey on adversarial attacks in computer vision:
  Taxonomy, visualization and future directions}.
\newblock \bibinfo{journal}{\emph{Computers \& Security}}
  \bibinfo{volume}{121} (\bibinfo{year}{2022}), \bibinfo{pages}{102847}.
\newblock


\bibitem[Madry et~al\mbox{.}(2017)]%
        {madry2017towards}
\bibfield{author}{\bibinfo{person}{Aleksander Madry},
  \bibinfo{person}{Aleksandar Makelov}, \bibinfo{person}{Ludwig Schmidt},
  \bibinfo{person}{Dimitris Tsipras}, {and} \bibinfo{person}{Adrian Vladu}.}
  \bibinfo{year}{2017}\natexlab{}.
\newblock \showarticletitle{Towards deep learning models resistant to
  adversarial attacks}.
\newblock \bibinfo{journal}{\emph{arXiv preprint arXiv:1706.06083}}
  (\bibinfo{year}{2017}).
\newblock


\bibitem[Modas et~al\mbox{.}(2019)]%
        {safmm19}
\bibfield{author}{\bibinfo{person}{Apostolos Modas},
  \bibinfo{person}{Seyed-Mohsen Moosavi-Dezfooli}, {and}
  \bibinfo{person}{Pascal Frossard}.} \bibinfo{year}{2019}\natexlab{}.
\newblock \showarticletitle{Sparsefool: a few pixels make a big difference}. In
  \bibinfo{booktitle}{\emph{Proceedings of the IEEE/CVF Conference on Computer
  Vision and Pattern Recognition}}. \bibinfo{pages}{9087--9096}.
\newblock


\bibitem[Mohimani et~al\mbox{.}(2008)]%
        {mohimani2008fast}
\bibfield{author}{\bibinfo{person}{Hosein Mohimani}, \bibinfo{person}{Massoud
  Babaie-Zadeh}, {and} \bibinfo{person}{Christian Jutten}.}
  \bibinfo{year}{2008}\natexlab{}.
\newblock \showarticletitle{A fast approach for overcomplete sparse
  decomposition based on smoothed $\ell^0$ norm}.
\newblock \bibinfo{journal}{\emph{IEEE Transactions on Signal Processing}}
  \bibinfo{volume}{57}, \bibinfo{number}{1} (\bibinfo{year}{2008}),
  \bibinfo{pages}{289--301}.
\newblock


\bibitem[Nesterov et~al\mbox{.}(2018)]%
        {nesterov2018lectures}
\bibfield{author}{\bibinfo{person}{Yurii Nesterov} {et~al\mbox{.}}}
  \bibinfo{year}{2018}\natexlab{}.
\newblock \bibinfo{booktitle}{\emph{Lectures on convex optimization}}.
  Vol.~\bibinfo{volume}{137}.
\newblock \bibinfo{publisher}{Springer}.
\newblock


\bibitem[Papernot et~al\mbox{.}(2016)]%
        {papernot2016limitations}
\bibfield{author}{\bibinfo{person}{Nicolas Papernot}, \bibinfo{person}{Patrick
  McDaniel}, \bibinfo{person}{Somesh Jha}, \bibinfo{person}{Matt Fredrikson},
  \bibinfo{person}{Z~Berkay Celik}, {and} \bibinfo{person}{Ananthram Swami}.}
  \bibinfo{year}{2016}\natexlab{}.
\newblock \showarticletitle{The limitations of deep learning in adversarial
  settings}. In \bibinfo{booktitle}{\emph{2016 IEEE European symposium on
  security and privacy (EuroS\&P)}}. IEEE, \bibinfo{pages}{372--387}.
\newblock


\bibitem[Russakovsky et~al\mbox{.}(2015)]%
        {russakovsky2015imagenet}
\bibfield{author}{\bibinfo{person}{Olga Russakovsky}, \bibinfo{person}{Jia
  Deng}, \bibinfo{person}{Hao Su}, \bibinfo{person}{Jonathan Krause},
  \bibinfo{person}{Sanjeev Satheesh}, \bibinfo{person}{Sean Ma},
  \bibinfo{person}{Zhiheng Huang}, \bibinfo{person}{Andrej Karpathy},
  \bibinfo{person}{Aditya Khosla}, \bibinfo{person}{Michael Bernstein},
  {et~al\mbox{.}}} \bibinfo{year}{2015}\natexlab{}.
\newblock \showarticletitle{Imagenet large scale visual recognition challenge}.
\newblock \bibinfo{journal}{\emph{International journal of computer vision}}
  \bibinfo{volume}{115}, \bibinfo{number}{3} (\bibinfo{year}{2015}),
  \bibinfo{pages}{211--252}.
\newblock


\bibitem[Sadeghi and Babaie-Zadeh(2016)]%
        {Sadeghi2016}
\bibfield{author}{\bibinfo{person}{Mostafa Sadeghi} {and}
  \bibinfo{person}{Massoud Babaie-Zadeh}.} \bibinfo{year}{2016}\natexlab{}.
\newblock \showarticletitle{Iterative sparsification-projection: fast and
  robust sparse signal approximation}.
\newblock \bibinfo{journal}{\emph{IEEE Transactions on Signal Processing}}
  \bibinfo{volume}{64}, \bibinfo{number}{21} (\bibinfo{year}{2016}),
  \bibinfo{pages}{5536--5548}.
\newblock


\bibitem[Sadiku et~al\mbox{.}(2025)]%
        {ICLR2025_398b00a0}
\bibfield{author}{\bibinfo{person}{Shpresim Sadiku}, \bibinfo{person}{Moritz
  Wagner}, {and} \bibinfo{person}{Sebastian Pokutta}.}
  \bibinfo{year}{2025}\natexlab{}.
\newblock \showarticletitle{GSE: Group-wise Sparse and Explainable Adversarial
  Attacks}. In \bibinfo{booktitle}{\emph{International Conference on
  Representation Learning}}, \bibfield{editor}{\bibinfo{person}{Y.~Yue},
  \bibinfo{person}{A.~Garg}, \bibinfo{person}{N.~Peng},
  \bibinfo{person}{F.~Sha}, {and} \bibinfo{person}{R.~Yu}} (Eds.),
  Vol.~\bibinfo{volume}{2025}. \bibinfo{pages}{22990--23006}.
\newblock
\urldef\tempurl%
\url{https://proceedings.iclr.cc/paper_files/paper/2025/file/398b00a05b847ac65eb98c8e5e865fe8-Paper-Conference.pdf}
\showURL{%
\tempurl}


\bibitem[Su et~al\mbox{.}(2019)]%
        {opajd19}
\bibfield{author}{\bibinfo{person}{Jiawei Su},
  \bibinfo{person}{Danilo~Vasconcellos Vargas}, {and} \bibinfo{person}{Kouichi
  Sakurai}.} \bibinfo{year}{2019}\natexlab{}.
\newblock \showarticletitle{One pixel attack for fooling deep neural networks}.
\newblock \bibinfo{journal}{\emph{IEEE Transactions on Evolutionary
  Computation}} \bibinfo{volume}{23}, \bibinfo{number}{5}
  (\bibinfo{year}{2019}), \bibinfo{pages}{828--841}.
\newblock


\bibitem[Szegedy et~al\mbox{.}(2014)]%
        {szegedy2014intriguing}
\bibfield{author}{\bibinfo{person}{Christian Szegedy},
  \bibinfo{person}{Wojciech Zaremba}, \bibinfo{person}{Ilya Sutskever},
  \bibinfo{person}{Joan Bruna}, \bibinfo{person}{Dumitru Erhan},
  \bibinfo{person}{Ian Goodfellow}, {and} \bibinfo{person}{Rob Fergus}.}
  \bibinfo{year}{2014}\natexlab{}.
\newblock \showarticletitle{Intriguing properties of neural networks}. In
  \bibinfo{booktitle}{\emph{2nd International Conference on Learning
  Representations, ICLR 2014}}.
\newblock


\bibitem[Tang et~al\mbox{.}(2024)]%
        {tang2024black}
\bibfield{author}{\bibinfo{person}{Guijian Tang}, \bibinfo{person}{Wen Yao},
  \bibinfo{person}{Chao Li}, \bibinfo{person}{Tingsong Jiang}, {and}
  \bibinfo{person}{Shaowu Yang}.} \bibinfo{year}{2024}\natexlab{}.
\newblock \showarticletitle{Black-box adversarial patch attacks using
  differential evolution against aerial imagery object detectors}.
\newblock \bibinfo{journal}{\emph{Engineering Applications of Artificial
  Intelligence}}  \bibinfo{volume}{137} (\bibinfo{year}{2024}),
  \bibinfo{pages}{109141}.
\newblock


\bibitem[Touvron et~al\mbox{.}(2022)]%
        {touvron2022three}
\bibfield{author}{\bibinfo{person}{Hugo Touvron}, \bibinfo{person}{Matthieu
  Cord}, \bibinfo{person}{Alaaeldin El-Nouby}, \bibinfo{person}{Jakob Verbeek},
  {and} \bibinfo{person}{Herv{\'e} J{\'e}gou}.}
  \bibinfo{year}{2022}\natexlab{}.
\newblock \showarticletitle{Three things everyone should know about vision
  transformers}. In \bibinfo{booktitle}{\emph{Computer Vision--ECCV 2022: 17th
  European Conference, Tel Aviv, Israel, October 23--27, 2022, Proceedings,
  Part XXIV}}. Springer, \bibinfo{pages}{497--515}.
\newblock


\bibitem[Wachter et~al\mbox{.}(2017)]%
        {wachter2017counterfactual}
\bibfield{author}{\bibinfo{person}{Sandra Wachter}, \bibinfo{person}{Brent
  Mittelstadt}, {and} \bibinfo{person}{Chris Russell}.}
  \bibinfo{year}{2017}\natexlab{}.
\newblock \showarticletitle{Counterfactual explanations without opening the
  black box: Automated decisions and the GDPR}.
\newblock \bibinfo{journal}{\emph{Harv. JL \& Tech.}}  \bibinfo{volume}{31}
  (\bibinfo{year}{2017}), \bibinfo{pages}{841}.
\newblock


\bibitem[Wang et~al\mbox{.}(2020)]%
        {dlfwc20}
\bibfield{author}{\bibinfo{person}{Zhihao Wang}, \bibinfo{person}{Jian Chen},
  {and} \bibinfo{person}{Steven~CH Hoi}.} \bibinfo{year}{2020}\natexlab{}.
\newblock \showarticletitle{Deep learning for image super-resolution: A
  survey}.
\newblock \bibinfo{journal}{\emph{IEEE transactions on pattern analysis and
  machine intelligence}} (\bibinfo{year}{2020}).
\newblock


\bibitem[Wightman et~al\mbox{.}(2021)]%
        {wightman2021ResNet}
\bibfield{author}{\bibinfo{person}{Ross Wightman}, \bibinfo{person}{Hugo
  Touvron}, {and} \bibinfo{person}{Herv{\'e} J{\'e}gou}.}
  \bibinfo{year}{2021}\natexlab{}.
\newblock \showarticletitle{Resnet strikes back: An improved training procedure
  in timm}.
\newblock \bibinfo{journal}{\emph{arXiv preprint arXiv:2110.00476}}
  (\bibinfo{year}{2021}).
\newblock


\bibitem[Xu et~al\mbox{.}(2018)]%
        {xu2018structured}
\bibfield{author}{\bibinfo{person}{Kaidi Xu}, \bibinfo{person}{Sijia Liu},
  \bibinfo{person}{Pu Zhao}, \bibinfo{person}{Pin-Yu Chen},
  \bibinfo{person}{Huan Zhang}, \bibinfo{person}{Quanfu Fan},
  \bibinfo{person}{Deniz Erdogmus}, \bibinfo{person}{Yanzhi Wang}, {and}
  \bibinfo{person}{Xue Lin}.} \bibinfo{year}{2018}\natexlab{}.
\newblock \showarticletitle{Structured Adversarial Attack: Towards General
  Implementation and Better Interpretability}. In
  \bibinfo{booktitle}{\emph{International Conference on Learning
  Representations}}.
\newblock


\bibitem[Xu and Yin(2017)]%
        {xu2017globally}
\bibfield{author}{\bibinfo{person}{Yangyang Xu} {and} \bibinfo{person}{Wotao
  Yin}.} \bibinfo{year}{2017}\natexlab{}.
\newblock \showarticletitle{A globally convergent algorithm for nonconvex
  optimization based on block coordinate update}.
\newblock \bibinfo{journal}{\emph{Journal of Scientific Computing}}
  \bibinfo{volume}{72}, \bibinfo{number}{2} (\bibinfo{year}{2017}),
  \bibinfo{pages}{700--734}.
\newblock


\bibitem[Yang et~al\mbox{.}(2025)]%
        {yang2025diffmic}
\bibfield{author}{\bibinfo{person}{Yijun Yang}, \bibinfo{person}{Huazhu Fu},
  \bibinfo{person}{Angelica~I Aviles-Rivero}, \bibinfo{person}{Zhaohu Xing},
  {and} \bibinfo{person}{Lei Zhu}.} \bibinfo{year}{2025}\natexlab{}.
\newblock \showarticletitle{DiffMIC-v2: Medical Image Classification via
  Improved Diffusion Network}.
\newblock \bibinfo{journal}{\emph{IEEE Transactions on Medical Imaging}}
  (\bibinfo{year}{2025}).
\newblock


\bibitem[Yuan and Lin(2006)]%
        {yuan2006model}
\bibfield{author}{\bibinfo{person}{Ming Yuan} {and} \bibinfo{person}{Yi Lin}.}
  \bibinfo{year}{2006}\natexlab{}.
\newblock \showarticletitle{Model selection and estimation in regression with
  grouped variables}.
\newblock \bibinfo{journal}{\emph{Journal of the Royal Statistical Society
  Series B: Statistical Methodology}} \bibinfo{volume}{68}, \bibinfo{number}{1}
  (\bibinfo{year}{2006}), \bibinfo{pages}{49--67}.
\newblock


\bibitem[Zhang et~al\mbox{.}(2025)]%
        {zhang2025adversarial}
\bibfield{author}{\bibinfo{person}{Chiyu Zhang}, \bibinfo{person}{Lu Zhou},
  \bibinfo{person}{Xiaogang Xu}, \bibinfo{person}{Jiafei Wu}, {and}
  \bibinfo{person}{Zhe Liu}.} \bibinfo{year}{2025}\natexlab{}.
\newblock \showarticletitle{Adversarial attacks of vision tasks in the past 10
  years: A survey}.
\newblock \bibinfo{journal}{\emph{Comput. Surveys}} \bibinfo{volume}{58},
  \bibinfo{number}{2} (\bibinfo{year}{2025}), \bibinfo{pages}{1--42}.
\newblock


\bibitem[Zhang and Li(2019)]%
        {aeozl19}
\bibfield{author}{\bibinfo{person}{Jiliang Zhang} {and} \bibinfo{person}{Chen
  Li}.} \bibinfo{year}{2019}\natexlab{}.
\newblock \showarticletitle{Adversarial examples: Opportunities and
  challenges}.
\newblock \bibinfo{journal}{\emph{IEEE transactions on neural networks and
  learning systems}} \bibinfo{volume}{31}, \bibinfo{number}{7}
  (\bibinfo{year}{2019}), \bibinfo{pages}{2578--2593}.
\newblock


\bibitem[Zhu et~al\mbox{.}(2021)]%
        {zhu2021sparse}
\bibfield{author}{\bibinfo{person}{Mingkang Zhu}, \bibinfo{person}{Tianlong
  Chen}, {and} \bibinfo{person}{Zhangyang Wang}.}
  \bibinfo{year}{2021}\natexlab{}.
\newblock \showarticletitle{Sparse and imperceptible adversarial attack via a
  homotopy algorithm}. In \bibinfo{booktitle}{\emph{International Conference on
  Machine Learning}}. PMLR, \bibinfo{pages}{12868--12877}.
\newblock


\end{thebibliography}
\end{document}